\newcommand{\game}{G}
\newcommand{\tuple}[1]{\ensuremath{\left\langle#1\right\rangle}}
\newcommand{\outcome}[1]{\ensuremath{\mathsf{Outcome}\left(#1\right)}}
\def\Succname{\ensuremath{{\sf Succ}}}
\def\Succ#1{\Succname\left(#1\right)}
\def\bad{\ensuremath{{\sf Bad}}}
\def\win{\ensuremath{{\sf Win}}}
\def\losing{\ensuremath{{\sf Lose}}}
\def\Attr{\ensuremath{{\sf Attr}}}
\def\uc#1{\ensuremath{\uparrow\!\left(#1\right)}}
\def\uco#1#2{\ensuremath{\uparrow^{#2}\!\left(#1\right)}}
\def\minac#1{\ensuremath{\left\lfloor #1\right\rfloor}}
\def\PS{\ensuremath{\mathcal{S}}\xspace}
\def\PR{\ensuremath{\mathcal{T}}\xspace} 
\def\ALS{\ensuremath{{\sf AntiLosing}}}
\def\Frontier{\ensuremath{{\sf Frontier}}}
\def\nat{\textsc{nat}\xspace}
\def\rct{\textsc{rct}\xspace}
\def\states{\ensuremath{\textsc{States}(\tau)}\xspace}
\def\act{\textsc{Active}}
\def\eligible{\textsc{Eligible}}
\def\laxity{\textsc{Laxity}}
\def\succS{Succ_\PS}
\def\succR{Succ_\PR}
\def\wbs{\ensuremath{\operatorname{\unrhd}}}
\def\nwbs{\ensuremath{\operatorname{\not\!\!\unrhd}}}
\def\iesim{\sqsupseteq}
\def\ES{\ensuremath{{\textsf{ES}}}\xspace}
\def\Fw{\ensuremath{{\textsf{OTFUR-TBA}}}\xspace}
\def\Bw{\ensuremath{{\textsf{BW-TBA}}}\xspace}
\def\EDF{\ensuremath{{\textsf{EDF}}}\xspace}
\def\Prename{\ensuremath{{\sf Pre}}}
\def\PreEname{\exists\Prename}
\def\PreAname{\forall\Prename}
\def\PreA#1{\PreAname\left(#1\right)}
\def\PreE#1{\PreEname\left(#1\right)}
\def\UPrename{\ensuremath{{\sf UPre}}}
\def\UPre#1{\UPrename\left(#1\right)}
\def\UPreATCname{\UPrename^\#}
\def\UPreATC#1{\UPreATCname\left(#1\right)}
\def\ATC{\mathcal{A}}
\def\PreAATCname{\PreAname^\#}
\def\PreEATCname{\PreEname^\#}
\def\PreAATC#1{\PreAATCname \left(#1\right)}
\def\PreEATC#1{\PreEATCname \left(#1\right)}
\def\PreAS#1#2{\forall \Prename^{*} \left(#1,#2\right)}
\def\UPreS#1#2{\UPrename^{*} \left(#1,#2\right)}
\def\Prename{\ensuremath{{\sf Pre}}}
\newtheorem{definition}{Definition}
\newtheorem{example}{Example}
\newtheorem{lemma}{Lemma}
\newtheorem{theorem}{Theorem}
\newtheorem{proposition}{Proposition}
\begin{document}

\title{A Backward Algorithm for the Multiprocessor Online
  Feasibility of Sporadic Tasks\footnote{This work has been supported by the
F.R.S./FNRS PDR grant FORESt: Formal verificatiOn techniques for
REal-time Scheduling problems.}}

\author{Gilles Geeraerts and Jo\"el Goossens and Thi-Van-Anh Nguyen}

\date{Universit\'e libre de Bruxelles (ULB), Facult\'e des Sciences, D\'epartement d'Informatique, Belgium \\
\texttt{gigeerae@ulb.ac.be\\ joel.goossens@ulb.ac.be\\ thi-van-anh.nguyen@ulb.ac.be}}

\maketitle

\begin{abstract}
  The online feasibility problem (for a set of sporadic tasks) asks
  whether there is a scheduler that always prevents deadline misses
  (if any), whatever the sequence of job releases, which is \textit{a
    priori} unknown to the scheduler. In the multiprocessor setting,
  this problem is notoriously difficult. The only exact test for this
  problem has been proposed by Bonifaci and Marchetti-Spaccamela: it
  consists in modelling all the possible behaviours of the scheduler
  and of the tasks as a graph; and to interpret this graph as a game
  between the tasks and the scheduler, which are seen as antagonistic
  players. Then, computing a correct scheduler is equivalent to
  finding a winning strategy for the `scheduler player', whose
  objective in the game is to avoid deadline misses. In practice,
  however this approach is limited by the intractable size of the
  graph. In this work, we consider the classical \emph{attractor}
  algorithm to solve such games, and introduce \emph{antichain
    techniques} to optimise its performance in practice and overcome
  the huge size of the game graph. These techniques are inspired from
  results from the formal methods community, and exploit the specific
  structure of the feasibility problem. We demonstrate empirically
  that our approach allows to dramatically improve the performance of
  the game solving algorithm.\end{abstract}

\section{Introduction}

The model of \emph{sporadic tasks} is nowadays a well-established
model for real-time systems. In this model, one considers a set of
real-time \emph{tasks}, that regularly release \emph{jobs}. The jobs
are the actual computational payloads, and the \emph{scheduler} must
assign those jobs to the available processor(s) of the platform in a
way that ensures no job misses its \emph{deadline} (i.e., ensuring
that all jobs are granted enough computation time within a given time
frame). The term \emph{sporadic} refers to the fact that there is some
uncertainty on the moment at which tasks release jobs: in the sporadic
model, an \emph{inter-arrival time} $T_i$ is associated with each task
$\tau_i$. This means that at least $T_i$ time units will elapse
in-between two job releases of the task $\tau_i$, but the release can
be delayed by an arbitrary long time (this is in contrast with the Liu
and Layland model~\cite{LL73} where the tasks release jobs
\emph{periodically}, i.e., every $T_i$ time unit).

In the present paper, we consider the so-called \emph{online
  feasibility} problem for such \emph{sporadic} tasks on a
\emph{multiprocessor platform}, i.e., a platform that boasts several
identical computational units on which the scheduler can choose to run
any active job. The \emph{online feasibility problem} asks, given a
set of (sporadic) tasks to \emph{compute}---if possible---a scheduling
policy for those tasks ensuring that no task ever misses its
deadline. An important assumption is that the scheduler is \emph{not
  clairvoyant}, i.e., it has no information on the future job
releases, apart from the information that can be deduced from the
current history of execution. Hence, the scheduler that has to be
computed must be able to react to any potential sequence of job
releases (it is thus an \emph{online} scheduler\footnote{Note that our
  approach requires a significant offline preprocessing phase, hence
  the schedulers that are produced can be characterised as
  \emph{semi}online, according to some authors.}).

This feasibility problem of sporadic tasks on multiprocessor platforms
is appealing both from the practical and the theoretical point of
view. On the practical side, the problem is rich enough to model
realistic situations. On the theoretical side, the different degrees
of non-determinism of the model (the exact release times of the jobs,
the exact execution duration of jobs, the many possible decisions of
the scheduler when there are more active tasks than available CPUs)
make its behaviour difficult to predict.

Indeed, while several
necessary and sufficient conditions have been identified for
feasibility, these criteria are not able to decide feasibility of all
systems. Moreover, while a collection of optimal schedulers for
\emph{uni}processors is known, \emph{optimal} online multiprocessor
scheduling of sporadic real-time tasks is known to be
impossible~\cite{FGB10}, at least for constrained or arbitrary
deadline systems.

In this paper, we build on the ideas introduced by Bonifaci and
Marchetti-Spaccamela \cite{DBLP:conf/esa/BonifaciM10}, and consider a
\emph{game-based model} for our feasibility problem. Intuitively, we
regard the scheduler as a player that competes against the coalition
of the tasks. The moves of the scheduler are the possible assignments
of active jobs to the CPUs, and the moves of the tasks are the
possible job releases. The objective of the scheduler in this game is
to avoid deadline misses, and we assume that the tasks have the
opposite objective. This models the fact that the tasks will not
collaborate (i.e., release jobs in a way that avoids deadline misses)
with the scheduler. Then, it should be clear that the system of tasks
is \emph{feasible} iff there is a \emph{winning strategy} for the
scheduler (one that enforces the scheduler's objective whatever the
tasks play). When the system is feasible, a scheduler can be extracted
from this winning strategy. To the best of our knowledge, this is the
only \emph{exact} feasibility test that has been proposed so far for
this problem.

Concretely, the game we consider is a special case of \emph{game
  played on graph}, that we call \emph{scheduling games}: the nodes of
the graph model the possible states of the system, and the edges model
the possible moves of the players. This approach is akin to the
\emph{controller synthesis problem} which has been extensively
considered in the formal methods community for the past 25 years
\cite{PR89}: given a model of a computer system interacting with an
environment and given a property that this system must satisfy
(typically, a safety property), the controller synthesis problem
consists in computing a \emph{controller} that enforces the property,
whatever the environment does. The parallel with our feasibility
problem should be clear: the scheduler corresponds to the controller,
the environment to the tasks, and the property is to avoid deadline
misses. In the formal methods literature, the classical approach to
the synthesis problem boils down to computing a winning strategy in a
game.

The main practical limitation to this approach is the intractable size
of the game graph---the so-called `state explosion problem', a typical
limitation of exhaustive verification (e.g. model-checking) and
synthesis techniques that consider all possible system sates. In our
case, this graph is exponential in the size of the problem, so,
building the whole graph, then analysing it (as proposed by Bonifaci
and Marchetti-Spaccamela \cite{DBLP:conf/esa/BonifaciM10}) is
infeasible in practice, even though the algorithms that compute
winning strategies for games played on graphs are efficient (they run
in polynomial time wrt the size of the graph). Nevertheless, the
community of formal methods has proposed several techniques (such as
efficient data structures, heuristics to avoid the exploration of all
states, \textit{etc}\ldots) that do overcome this barrier in practice
(see for example \cite{Bur92}). Building on these results, we have
shown, in previous works, that the same techniques have the potential
to dramatically improve the running time of some graph-based
algorithms solving real-time scheduling problems. More precisely, in
\cite{GGL-rts13} and \cite{TCS} we have applied so-called
\emph{antichain} \cite{DR10} techniques (introduced originally a.o. to
speed up model checking algorithms)  and demonstrated experimentally
their practical interest.

\noindent\textbf{Contributions} In this paper,
we continue our line of research and apply the antichain approach to
the improve the efficiency of the classical \emph{attractor} (or
\emph{backward induction}) algorithm to solve scheduling games. At the
core of our optimisation is a so-called \emph{simulation relation}
that allows one to avoid exploring a significant portion the game's
nodes. While the antichain approach is a generic one, the definition
of the simulation relation has to be tailored to the specific class of
problem under consideration. Our simulation relation exploits tightly
the peculiar structure of scheduling games. We perform a series of
experiments (see Section~\ref{sec:experimental-results}) to
demonstrate the usefulness of our new approach. Our enhanced algorithm
performs one order of magnitude better that the un-optimised attractor
algorithm (see \figurename~\ref{fig:es}). Moreover, we manage to
compute schedulers for systems that are not schedulable under the
classical EDF scheduler: in particular, when the load of the system is
close to 100\%, our approach outperforms vastly EDF (see
\figurename~\ref{fig:edf_bw}).

\noindent\textbf{Related works} The feasibility problem for sporadic tasks
is a well-studied problem in the real-time scheduling community.
Apart for the very particular case where for each task the deadline
corresponds to the period (implicit deadline systems), where
polynomial time feasibility tests exist, it exhibits a high
complexity. For instance, even in the uniprocessor case, the
feasibility problem of recurrent (periodic or sporadic) constrained
deadline tasks is strongly coNP complete~\cite{EkbergECRTS2015}. For
constrained/arbitrary deadline and sporadic task systems, several
\emph{necessary} or \emph{sufficient} conditions have been identified
for feasibility. However, these criteria are not sufficient to decide
feasibility for all systems: there are some systems that do not meet
any sufficient conditions (hence, are not surely feasible) and respect
all identified necessary conditions (hence are not surely
infeasible). We refer the reader to a survey by Davis and
Burns~\cite{davis2011survey}. 

As already mentioned, the only \emph{exact} test (i.e., algorithm to
decide the problem) is the one of Bonifaci and Marchetti-Spaccamela
\cite{DBLP:conf/esa/BonifaciM10} that consists in reducing the problem
to the computation of a winning strategy in a game. However no
optimisations are proposed to make this approach practical. In a
previous work \cite{TCS}, we have already presented a game solving
algorithm, enhanced by antichains, to solve the feasibility
problem. Compared to the present paper, \cite{TCS} relies on the same
model (Section~\ref{sec:feasibility-as-game}) and the same partial
order $\iesim$. However, the algorithm we consider here
(\algorithmcfname~\ref{alg:es}) is completely different. Its
improvement based on antichains (\algorithmcfname~\ref{alg:bw-tba},
discussed from Section~\ref{sec:improved-algorithm} onward) is
non-trivial and constitutes original work. Our experimental evaluation
(Section~\ref{sec:experimental-results}) compares our two approaches
and show that they are actually complementary.

\section{Feasibility as a game}\label{sec:feasibility-as-game}

Let us now define precisely the \emph{online feasibility problem} of
\emph{sporadic tasks} on a {\it multiprocessor platform}. Remember
that {\it online} means that the scheduler has no a priori knowledge
of the behaviour of the tasks; but must react, during the execution of
the systems to the requests made by the tasks. To formalise this
problem, we consider a set $\tau = \{\tau_1, \ldots, \tau_n\}$ of
sporadic tasks to be scheduled on $m$ identical processors. Each task
$\tau_i$ is characterised by three non-negative integer parameters
$C_i$, $D_i$, $T_i$. Parameter $C_i$ is the {\it worst case execution
  time} (WCET), i.e. an upper bound on the number of CPU ticks that
are needed to complete the job. $D_i$ is the \emph{deadline}, relative
to each job arrival. Each job needs to have been assigned $C_i$ units
of CPU time before its deadline, lest it will \emph{miss its deadline}
(for example if some job of task $\tau_i$ is released at instant $t$,
it will need $C_i$ units of CPU time before $t+D_i$). Finally, $T_i$
is the {\it minimal interarrival time}: at least $T_i$ time units must
elapse between two job releases of $\tau_i$. We assume that jobs can
be preempted and can be freely migrated between CPUs, without
incurring any delay. An example of system with three tasks is given in
\tablename~\ref{tab:running_ex}. Finally, we assume that each job
requires its WCET. Since the case where all jobs consume their WCET is
always a possible scenario, this assumption is without loss of
generality. Indeed, if there is no scheduler when the tasks are
restricted to consume their whole WCET, then there will be no
scheduler in the case where the tasks also have the option to complete
earlier; and, if jobs complete earlier at runtime, it is always
possible to keep the CPU idle for the remaining allocated ticks.

\begin{table}
\begin{center}
  \begin{tabular}{lccc}
    \toprule
             & $C_i$     & $D_i$    & $T_i$     \\ 
    \midrule
    $\tau_1$ & 1         & 1        & 2          \\ 
    $\tau_2$ & 2         & 2        & 2           \\ 
    $\tau_3$ & 1         & 4        & 2        \\ 
    \bottomrule
  \end{tabular}
 
\end{center}
\caption{The task set used as our running example}
\label{tab:running_ex}
\end{table}

In such a real-time system, tasks freely submit new jobs, respecting
their minimal interarrival time. Then, the scheduler is responsible to
allocate the jobs to the CPUs (ensuring that no job misses its
deadline), a CPU tick occurs (note that we consider a discrete time
model), and a new round begins with the tasks submitting new jobs,
etc. To model formally this semantics, we rely on the notion of
\emph{scheduling game}, as sketched in the introduction. Before
defining formally this notion, let us introduce it through an example.
Given a set of tasks $\tau$ and a number of CPUs $m$, a scheduling
game is played on a graph $\game_{\tau,m}$, by two players which are
the scheduler (called player \PS) and the coalition of the tasks
(player \PR). In the case of the task set $\tau$ of
\tablename~\ref{tab:running_ex} upon a 2-CPUs platform ($m=2$), a
prefix of $\game_{\tau,m}$ is given in
\figurename~\ref{fig:gameRS}. In this graph, the nodes model the
system states and the edges model the possible moves of the players
(all these notions will be defined precisely hereinafter). Each path is a
possible execution of the system. The actual path which is played
stems from the interaction of the players which play in turn: the
square (respectively round) nodes `belong' to \PR (\PS), which means
that \PR (\PS) decides in those nodes what will be the successor
node. The game starts in $\mathrm{init}^\PR$, where the tasks (\PR)
decide which jobs to submit. In the full graph $\game_{\tau,m}$, there
are $2^3=8$ different possibilities, but we display here only the case
where all tasks ($\tau_1$, $\tau_2$ and $\tau_3$) submit a job, which
moves the system to state $\PS_1$. In this state, the scheduler must
decide which tasks to schedule on the $m=2$ available CPUs. If it
decides to schedule only $\tau_1$, we obtain state $\PR_2$ after one
clock tick. From the game graph in \figurename~\ref{fig:gameRS}, it is
easy to see that this is bad choice for the scheduler. Indeed, from
$\PR_2$, player $\PR$ can move the system to $\PS_2$, and in all
successors ($\PR_4$,\ldots, $\PR_7$) of $\PS_2$ a task misses its
deadline (depicted by thick node borders). Hence, whatever the
scheduler plays from $\PS_2$ and $\PR_2$, he loses the game. On
the other hand, scheduling $\{\tau_1,\tau_2\}$ and $\{\tau_2,\tau_3\}$
from $\PS_1$ and $\PS_3$ respectively, guarantees the scheduler to win
the game (all successors of $\PR_{11}$ that are not shown are winning
too).

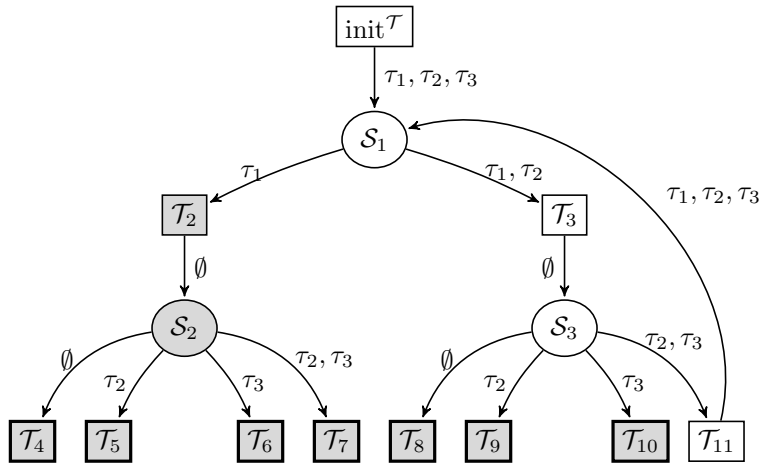
\begin{figure}
\begin{center}
\begin{tikzpicture}[->,>=stealth',shorten >=1pt,auto,node distance=1.5cm,
                    semithick]

  \tikzstyle{every state}=[text=black]

\node[rectangle,draw] (R1)  {init$^\PR$};
\node[ellipse, draw] (S1) [below of=R1]  { $\PS_1$};
\path (R1) edge [right]             node {$\tau_1, \tau_2,\tau_3$}  (S1);

\node[rectangle,draw,fill=gray!30] (R2) [below of=S1, xshift=-2.5cm, node distance=1cm]  {$\PR_2$};
\node[rectangle,draw] (R3) [below of=S1, xshift=2.5cm, node distance=1cm]  {$\PR_3$};

\path (S1) edge  [left, bend right=5]             node {$\tau_1$ } (R2)
         (S1) edge  [right, bend left=5]             node {$\tau_1, \tau_2$ }(R3);

\node[ellipse,draw,fill=gray!30] (S2) [below of=R2]  {$\PS_2$};
\path (R2) edge [right]             node {$\emptyset$ } (S2);

\node[rectangle,draw,fill=gray!30,very thick] (R4) [below of=S2, xshift=-2cm]  {$\PR_4$ };
\node[rectangle,draw,fill=gray!30,very thick] (R5) [below of=S2, xshift=-1cm]  {$\PR_5$ };
\node[rectangle,draw,fill=gray!30,very thick] (R6) [below of=S2, xshift=1cm]  {$\PR_6$ };
\node[rectangle,draw,fill=gray!30,very thick] (R7) [below of=S2, xshift=2cm]  {$\PR_7$};

\path (S2) edge [bend right,left]          node {$\emptyset$ } (R4)
	(S2) edge [bend right=10, left] node {$\tau_2$} (R5)
	(S2) edge [bend left=10, right] node{$\tau_3$} (R6)
         (S2) edge [bend left,right] node{$\tau_2, \tau_3$} (R7);

\node[ellipse,draw] (S3) [below of=R3]  {$\PS_3$};
\path (R3) edge [left]             node {$\emptyset$ } (S3);

\node[rectangle,draw,fill=gray!30,very thick] (R8) [below of=S3, xshift=-2cm]  {$\PR_8$};
\node[rectangle,draw,fill=gray!30,very thick] (R9) [below of=S3, xshift=-1cm]  {$\PR_9$};
\node[rectangle,draw,fill=gray!30,very thick] (R10) [below of=S3, xshift=1cm]  {$\PR_{10}$};
\node[rectangle,draw] (R11) [below of=S3, xshift=2cm]  {$\PR_{11}$};

\path (S3) edge [bend right,left]          node {$\emptyset$ } (R8)
	(S3) edge [bend right=10, left] node {$\tau_2$} (R9)
	(S3) edge [bend left=10, right] node{$\tau_3$} (R10)
         (S3) edge [bend left,right] node[near start]{$\tau_2, \tau_3$} (R11);

\path (R11) edge [bend right=60]             node[right of = R3, xshift=-0.5cm] {$\tau_1, \tau_2, \tau_3$ } (S1);

\end{tikzpicture}

\caption{A prefix of the scheduling game corresponding to the system
  in \tablename~\ref{tab:running_ex}. Square nodes belong to \PR and
  the round ones to \PS. Gray nodes are \emph{losing}. Nodes with thick
  borders are \emph{bad} nodes, i.e. where at least one task misses
  its deadline.}
\label{fig:gameRS}
\end{center}

\end{figure}

\subsection{Scheduling games}

Let us now formally define scheduling games. We start by the notion of
system state that models the current status of all tasks in the system
\cite{BC-opodis07}. In all states $S$ of the system, we store two
pieces of information for each task $\tau_i$:
\begin{inparaenum}[(i)]
\item The earliest next arrival time $\nat_S(\tau_i)$ of the next job of
  $\tau_i$ and
\item the remaining computing time $\rct_S(\tau_i)$ of the current
  job\footnote{If several jobs of the same task are active at the same
  time, they are treated one at a time, in FIFO order.}
  of $\tau_i$.
\end{inparaenum}

\begin{definition}[System states]
  Let $\tau = \{\tau_1, \ldots, \tau_n\}$ be a set of sporadic
  tasks. A system state $S$ of $\tau$ is a pair ($\nat_S, \rct_S$)
  where:
\begin{inparaenum}[(i)]
\item $\nat_S$ is a function assigning to all tasks $\tau_i$ a value
  $\nat_S(\tau_i)\leq T_i$; and
\item $\rct_S$ is a function assigning to all tasks $\tau_i$ a value
  $\rct_S(\tau_i)\leq C_i$
\end{inparaenum}
\end{definition}
Observe that, since the $\nat$ parameter is not bounded below, there
are infinitely many system states. We will limit the number of states
to a finite set when defining the game.  A task $\tau_i$ is said to be
{\em active} in state $S$ if it currently has a job that has not
finished in $S$, i.e., the set of active tasks in $S$ is
$\act(S) = \{ \tau_i \mid \rct_S(\tau_i ) > 0 \}$. To the contrary, we
say that $\tau_i$ is \emph{idle} in $S$ iff $\rct_S(\tau_i ) = 0$.  A
task $\tau_i$ is {\em eligible} in $S$ if it can submit a job from
this configuration, i.e., the set of eligible tasks in $S$ is:
$\eligible(S) = \{ \tau_i \mid \nat_S(\tau_i) \leq 0 \wedge
\rct_S(\tau_i) = 0 \}$.
Finally, The \emph{laxity} of $\tau_i$ in $S$ is the value
$\laxity_S(\tau_i) = \nat_S(\tau_i) - (T_i - D_i) - \rct_S(\tau_i)$.
Intuitively, the laxity of a task measures the amount of forthcoming
CPU steps that the task could remain idle without taking the risk to
miss its deadline. In particular, states where the laxity is negative
will be declared as losing states.

Now let us define the possible moves of both players. Let
$S\in\states$ be a system state. We first define the possible moves of
player \PS, i.e.\ the scheduler. One move of \PS is characterised by
the choice of the set $\tau'$ of tasks that will be
scheduled. Formally, for all $\tau'\subseteq \act(S)$ s.t.\
$|\tau'|\leq m$ (i.e., $\tau'$ does not contain more tasks than the
$m$ available CPUs), we let $\succS(S,\tau')$ be the (unique) state
$S'$ s.t.:
\begin{inparaenum}[(i)]
\item $\nat_{S'}(\tau_i) = \nat_{S} (\tau_i) -1$; and
\item $\rct_{S'}(\tau_i)=\rct_{S}(\tau_i)$ if $\tau_i \not\in
  \tau'$. Otherwise, $\rct_{S'}(\tau_i)=\rct_{S}(\tau_i) -1$.
\end{inparaenum}
Let us now define the possible moves of player \PR, i.e.\ the
tasks. Each move of \PR is characterised by a set $\tau'$ of eligible
tasks. We let $\succR(S,\tau')\subseteq\states$ be the set of system
states s.t.\ $S'\in\succR(S,\tau')$ iff:
\begin{inparaenum}[(i)]
\item $\nat_{S'}(\tau_i) = \nat_{S} (\tau_i)$ if
  $\tau_i \not\in \tau' $. Otherwise,
  $\nat_{S'}(\tau_i) = T_i$;
\item And if $\tau_i \not\in
  \tau'$, $\rct_{S'}(\tau_i) =\rct_{S}(\tau_i)$. Otherwise, $\rct_{S'}(\tau_i) = C_i$.
\end{inparaenum}
Finally, we let $\Succ{S}=\succS\cup\succR$.

\begin{example}
  The initial state of the game $\game_{\tau,m}$ in our running
  example (state $\mathrm{init}^\PR$ in \figurename~\ref{fig:gameRS})
  contains the system state $\tuple{(0,0),(0,0),(0,0)}$ (we denote a
  system state $S$ by the tuple
  $\tuple{(\rct(\tau_1),\nat(\tau_1)),\ldots,
    (\rct(\tau_n),\nat(\tau_n))}$).  Then,
  $\PS_1=\tuple{(1,2), (2,2), (1,2)}$. In $\PS_1$ all tasks are
  eligible, $\laxity_{\PS_1}(\tau_1)=0$ and
  $\laxity_{\PS_1}(\tau_3)=3$. Also,
  $\PR_3=\tuple{(0,1),(1,1),(1,1)}$, which shows that the job of
  $\tau_1$ that had been initially submitted has now completed (so,
  $\tau_1$ is idle in $\PR_3$), but we must still wait one time unit
  before all tasks can submit a new job (their $\nat$ are all equal to
  $1$). This explains why the only successor of $\PR_3$ is
  $\succR(\PR_3,\emptyset)$. Finally, $\tau_2$ has now missed its
  deadline in $\PR_{10}=\tuple{(0,0),(1,0),(0,0)}$. Indeed
  $\laxity_{\PR_{10}}(\tau_2)<0$.
  \end{example}

  For a set $\tau = \{\tau_1, \ldots, \tau_n \}$ of sporadic tasks,
  and $m$ CPUs, we let
  $\game_{\tau,m}=\tuple{V_{\PS}, V_{\PR}, E, I, \bad}$ where:
\begin{itemize}
\item
  $V_{\PS}= \{S\in\states\mid \forall 1\leq i\leq n:
  \laxity_S(\tau_i)\geq -1\} \times \{\PS\}$
  is the set of scheduler-controlled nodes.
\item
  $V_{\PR} = \{S\in\states\mid \forall 1\leq i\leq n:
  \laxity_S(\tau_i)\geq -1\}\times \{\PR\}$
  is the set of nodes controlled by the tasks.
\item $E = E_{\PS} \cup E_{\PR} $ is the set of edges where:
  \begin{itemize}
  \item $E_{\PS}$ is the set of scheduler moves. It contains an edge
    $\big((S,\PS), (S',\PR)\big)$ iff there is
    $\tau' \subseteq \act(S)$ s.t.\  $|\tau'| \leq m$ and
    $S' = \succS(S, \tau')$. In this case, we sometimes abuse the
    notations, and consider that this edge is labelled by $\tau'$,
    denoting it $\big((S,\PS), \tau', (S',\PR)\big)$.
  \item $E_{\PR}$ is the set of tasks moves. It contains an edge
    $\big( (S,\PR), (S',\PS)\big)$ iff there exists
    $\tau' \subseteq \eligible(S)$ s.t.\ $S' \in \succR(S,
    \tau')$. Again, we abuse notations and denote this edge by $\big(
    (S,\PR), \tau', (S',\PS)\big)$.
  \end{itemize}
\item $I = (S_0,\PR)$, where for all $1\leq i\leq n$:
  $\rct_{S_0}(\tau_i) = \nat_{S_0}(\tau_i) = 0$ is the initial
  state.
\item
  $\bad = \{(S,\PR) \in V_{\PR} \mid \exists \tau_i \in \act(S)$
    such that $\laxity_S(\tau_i) < 0\}$,
  i.e.\ $\bad$ is the set of failure states.
\end{itemize}
Observe that in this definition, we consider only states where the
laxity of all tasks is $\geq -1$. It is easy to check (see
\cite{GGL-rts13,BC-opodis07} for the details) that this restriction
guarantees the number of game nodes to be \emph{finite}. Moreover,
this is sufficient to detect all deadline misses, as any execution of
the system where a task misses a deadline will necessarily traverse a
state with a laxity equal to $-1$ for one of the tasks
\cite{GGL-rts13,BC-opodis07}. In the sequel we lift the $\nat$ and
$\rct$ notations to nodes of the games, i.e., for a node
$v=(S,\mathcal{P})$, we let $\nat_v=\nat_S$ and $\rct_v=\rct_S$.

\subsection{Computation of winning strategies}
Now note that the syntax of scheduling games is fixed, let us explain what
we want to compute on those games, i.e. \emph{winning strategies} for
player \PS. Fix a scheduling game
$\game_{\tau,m}=\tuple{V_{\PS}, V_{\PR}, E, I, \bad}$. First, a
\emph{play} in $\game_{\tau,m}$ is a path in the game graph, i.e. a
sequence $v_0,v_1,\ldots, v_i,\ldots$ of graph vertices (for all
$i\geq 0$: $v_i\in V_{\PS}\cup V_{\PR}$) s.t. for all $i\geq 0$:
$(v_i,v_{i+1})\in E$.  Then, a \emph{strategy} for player $\PS$ is a
function $\sigma:V_{\PS}\rightarrow V_{\PR}$ s.t. for all
$v\in V_{\PS}$: $(v,\sigma(v))\in E$. That is, $\sigma$ assigns, to
each node of $V_{\PS}$, one of its successor
\footnote{In general,
  strategies might need to have access to the whole prefix of
  execution to determine what to play next. However, since scheduling
  games are a special case of safety games \cite{Tho95}, it is
  well-known that \emph{positional} strategies are sufficient,
  i.e. strategies whose output depends only on the current game
  vertex.} 
to be played when $v$ is reached. Given a strategy
$\sigma$, the \emph{outcome} of $\sigma$ from a node
$v\in V_\PS\cup V_\PR$ is the set $\outcome{\sigma,v}$ of all possible
plays in $\game_{\tau,m}$ that start in $v$ and are obtained with
$\PS$ playing according to $\sigma$, i.e., 
$v_0,v_1,\ldots, v_i,\ldots \in \outcome{\sigma,v}$ iff
\begin{inparaenum}[(i)]
\item $v_0=v$; and
\item for all $i\geq 0$: $v_i\in V_{\PS}$ implies
  $v_{i+1}=\sigma(v_i)$.
\end{inparaenum}
We denote by $\outcome{\sigma}$ the set $\outcome{\sigma, I}$.  Then,
a strategy $\sigma$ is \emph{winning} from a node $v$ iff it ensures
$\PS$ to always avoid $\bad$ (from $v$) whatever $\PR$ does, i.e.,
$\sigma$ is winning from $v$ iff for all plays
$v_0,v_1,\ldots,v_i,\ldots\in\outcome{\sigma,v}$, for all $i\geq 0$:
$v_i\not\in \bad$. A strategy $\sigma$ is simply said \emph{winning}
if it is winning from the initial node $I$. We say that a node $v$ is
\emph{winning} (respectively \emph{losing}) iff there exists a (there
is no) strategy that is winning from $v$. We denote by $\win$ and
$\losing$ the sets of winning and losing nodes respectively. It is
well-known \footnote{This property is called \emph{determinacy} and
  can be established using the classical result of Donald Martin
  \cite{10.2307/1971035}.}  that, in our setting, all nodes are either
winning or losing, i.e., $\win\cup\losing=V_\PR\cup V_\PS$.  Hence,
computing one of those sets is sufficient to deduce the other.

It is easy to check that, given a set of sporadic tasks $\tau$ and a
number $m$ of CPUs, the scheduling game $\game_{\tau,m}$ captures all
the possible behaviours of $\tau$ on a platform of $m$ identical CPUs,
running under any possible online scheduler. Then, the answer to the
online feasibility problem is positive (i.e., there exists an online
scheduler) iff there is a winning strategy in the game, i.e., iff
$I\in\win$, or, equivalently, iff $I\not\in\losing$ (this is the
condition that our algorithms will check). Actually, the winning
strategy $\sigma$ is the scheduler itself: by definition, every pair
$(v,\sigma(v))$ corresponds to an edge $(v,\tau',\sigma(v))$ in
$\game_{\tau,m'}$, which means that, in state $v$, the scheduler must
grant the CPUs to all tasks in $\tau'$. Since $\sigma$ is winning no
bad state will ever be reached, hence, no deadline will be missed.
\begin{example}
  In the example graph of \figurename~\ref{fig:gameRS}, a winning
  strategy is a strategy $\sigma$ s.t. $\sigma(\PS_1)=\PR_3$ and
  $\sigma(\PS_3)=\PR_{11}$. It is easy to check that (on this graph
  excerpt), playing this strategy allows to avoid
  $\bad=\{\PR_4,\PR_5,\ldots,\PR_{10}\}$, whatever choice \PR makes.
\end{example}

Let us now recall a classical algorithm to solve \emph{safety games},
a broad class of games played on graphs to which scheduling games
belong. The algorithm is a variant of \emph{backward induction}: it
consists in computing \emph{the set of all nodes from which \PS cannot
  guarantee to avoid \bad}. This set, denoted $\Attr$ is called the
\emph{attractor} of $\bad$; in \figurename~\ref{fig:gameRS}, it
corresponds to the gray nodes. The computation is done inductively,
starting from \bad, and following the graph edges in a backward
fashion (hence the name). Formally, the algorithm relies on the two
following operators, for a set $V\subseteq V_{\PR}\cup V_{\PS}$.
\begin{eqnarray}
 \label{eq:pree1}\label{eq:pree2}
 \PreE V &=& \{v \mid \Succ v \cap V \neq \emptyset \} \\
 \label{eq:prea1}
 \PreA V &=& \{v \mid \Succ v \subseteq V\}.
\end{eqnarray}

That is, $v\in \PreE{V}$ iff \emph{there exists a successor} of $v$
which is in $V$; and $v\in\PreA{V}$ iff \emph{all successors} of $v$
belong to $V$.  Then, given $V\subseteq V_{\PR}\cup V_{\PS}$, we let
$\UPre{V}$ be the set of \emph{uncontrollable predecessors} of $V$:
$\UPre{V} = \PreE{V \cap V_\PS} \cup \PreA{V \cap V_\PR}$.
Intuitively, $v\in\UPre{V}$ iff $\PS$ cannot prevent the game to reach
$V$ (in one step) from $v$, either because $v$ is a $\PR$-node that
has a successor (which is necessarily a $\PS$-node) in $V$, or because
$v$ is an $\PS$-node that has all its successors (that are all
$\PR$-nodes) in $V$. Thus, if $V$ contains only losing nodes, then all
nodes in $\UPre{V}$ are losing too.

\begin{algorithm}
  \DontPrintSemicolon
  \ES
  \Begin{
    $i\leftarrow 0$\;
    $\Attr_0\leftarrow\bad$ \;
    \Repeat {$\Attr_i=\Attr_{i-1}$}{
      $\Attr_{i+1}\leftarrow \Attr_i\cup \UPre{\Attr_i}$ \;
      $i\leftarrow i+1$ \;
    }
    \Return $\Attr_i$ \;
  }
  \caption{Backward algorithm to compute $\losing$.}
  \label{alg:es}
\end{algorithm}

Equipped with these definitions, we can describe now the backward
algorithm (\algorithmcfname~\ref{alg:es}) for solving scheduling
games, that will be the basis of our contribution. It computes a
sequence $\left(\Attr_i\right)_{i\geq 0}$ of sets of states with the
following property: $v\in\Attr_i$ iff player $\PR$ has a strategy to
force the game to reach $\bad$ in at most $i$ steps from $v$. Since
the sets $\Attr_i$ (which contain nodes of the game graph) grow along
with the iterations of the algorithm, and since the graph is finite,
the algorithm necessarily terminates. It is well-known that it
computes exactly the set of losing states:
\begin{theorem}[Taken from \cite{Tho95}]\label{the:es-is-correct}
  When applied to a \\ scheduling game
  $\game=\tuple{V_{\PS}, V_{\PR}, E, I, \bad}$,
  \algorithmcfname~\ref{alg:es} terminates in at most $|V_{\PR}\cup
  V_{\PS}|$ steps and returns the set of losing states $\losing$.
\end{theorem}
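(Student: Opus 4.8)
The plan is to establish two things: termination within $|V_\PR \cup V_\PS|$ steps, and correctness (the returned set equals $\losing$). For termination, I would observe that the sequence $(\Attr_i)_{i \geq 0}$ is monotone nondecreasing by construction, since $\Attr_{i+1} = \Attr_i \cup \UPre{\Attr_i} \supseteq \Attr_i$. Because every $\Attr_i$ is a subset of the finite set $V_\PR \cup V_\PS$, a strictly increasing chain of subsets can have length at most $|V_\PR \cup V_\PS|$; hence the fixpoint condition $\Attr_i = \Attr_{i-1}$ must be reached within that many iterations. At each step where the loop does not yet terminate, at least one new node is added, which gives the stated bound.

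For correctness, the key is the invariant already announced in the text: \emph{$v \in \Attr_i$ iff player $\PR$ has a strategy to force the game into $\bad$ within at most $i$ steps from $v$}. I would prove this by induction on $i$. The base case $i=0$ holds because $\Attr_0 = \bad$, and a node forces reaching $\bad$ in zero steps exactly when it is already in $\bad$. For the inductive step, I would unfold $\Attr_{i+1} = \Attr_i \cup \UPre{\Attr_i}$ and use the semantics of $\UPre$. Concretely, a node $v$ lets $\PR$ force $\bad$ in at most $i+1$ steps iff either it already does so in at most $i$ steps (i.e.\ $v \in \Attr_i$), or after one move the game is guaranteed to be in a node from which $\PR$ can force $\bad$ in at most $i$ steps. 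The latter is precisely membership in $\UPre{\Attr_i}$: if $v \in V_\PR$, player $\PR$ controls the move and needs only \emph{one} successor in $\Attr_i$, matching $\PreE{\Attr_i \cap V_\PR}$; if $v \in V_\PS$, the scheduler controls the move, so $\PR$ must be able to win regardless of the choice, i.e.\ \emph{all} successors lie in $\Attr_i$, matching $\PreA{\Attr_i \cap V_\PS}$. This is the crux of the argument, and I expect it to be the main obstacle, since it requires carefully matching the existential/universal quantifiers in the definition of $\UPre$ to the controllability of each node type and handling the one-step unfolding of $\PR$'s strategy cleanly.

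Finally, I would connect the fixpoint to $\losing$. Let $\Attr_*$ denote the value returned at the fixpoint. Since the sequence stabilises, $\Attr_* = \bigcup_{i \geq 0} \Attr_i$, so by the invariant $v \in \Attr_*$ iff $\PR$ can force $\bad$ in \emph{some} finite number of steps from $v$. It remains to argue that this coincides with $v$ being losing, i.e.\ that $\PS$ has no strategy avoiding $\bad$ forever from $v$. One direction is immediate: if $\PR$ can force $\bad$ in finitely many steps, no $\PS$-strategy avoids $\bad$, so $v \in \losing$. For the converse I would use determinacy (invoked in the excerpt via Martin's theorem): if $v \notin \Attr_*$, then $v$ is fixed under $\UPre$, and I would show that the complement $V \setminus \Attr_*$ is closed, yielding a positional winning strategy for $\PS$ that keeps the play outside $\Attr_*$ and hence outside $\bad$ forever; this makes $v$ winning, so $\losing \subseteq \Attr_*$. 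Combining both inclusions gives $\Attr_* = \losing$, completing the proof.
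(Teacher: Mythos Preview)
The paper does not give its own proof of this theorem: it is stated with the attribution ``Taken from~\cite{Tho95}'' and no proof environment follows. Your plan is the standard attractor argument for safety games and is essentially what one finds in Thomas's paper or any textbook treatment, so there is no paper-specific proof to compare against; your outline is correct and complete for the purpose.

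One small slip worth fixing: in the inductive step you write that $v\in V_\PR$ corresponds to $\PreE{\Attr_i\cap V_\PR}$ and $v\in V_\PS$ to $\PreA{\Attr_i\cap V_\PS}$. In the paper's definition $\UPre{V}=\PreE{V\cap V_\PS}\cup\PreA{V\cap V_\PR}$, the intersection is with the type of the \emph{successor} nodes, not of $v$ itself (the game is strictly alternating, so a $\PR$-node has only $\PS$-successors and vice versa). Your informal description of who controls the move and which quantifier applies is correct; only the displayed subscripts are swapped.
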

Thus, the answer to the online feasibility problem is positive iff $I$
does not belong to the set returned by the algorithm. When it is the
case, a winning strategy (hence also a scheduler) can easily be
obtained: in all nodes $v\in V_{\PS}$ that are visited, we let
$\sigma(v)$ be any node $v'$ s.t. $(v,v')\in E$ and
$v'\not\in\losing$. Such a node $v'$ is guaranteed to exist by
construction (otherwise, by definition of $\forall\Prename$, $v$ would
have been added to the attractor, and would not be winning).
\begin{example}
  On the graph in \figurename~\ref{fig:gameRS}, we have:
  $\Attr_0=\{\PR_4,\PR_5,\ldots,\PR_{10}\}$;
  $\Attr_1=\Attr_0\cup\{\PS_2\}$; $\Attr_2=\Attr_1\cup\{\PR_2\}$; and
  $\Attr_2=\Attr_3$, so the algorithm converges after 3 steps.
\end{example}

While this algorithm is a nice theoretical solution, one needs to
answer the following questions to obtain a practical implementation:
\begin{inparaenum}[(i)]
\item how can we compute a representation of $\bad$? and
\item given $\Attr_i$, how can we compute $\Attr_{i+1}$?
\end{inparaenum}
Since the game graph is finite, these two questions can be solved by
building the whole game graph, and analysing it. However, in practice,
the graph has a size which is exponential in the description of the
problem instance \cite{DBLP:conf/esa/BonifaciM10}, so applying this
algorithm straightforwardly is not possible in practice. The core
contribution of the present work is to propose heuristics that exploit
the particular structure of scheduling games, in order to speed up
\algorithmcfname~\ref{alg:es}.

\section{Antichain techniques for  sporadic tasks\label{sec:antich-techn-spor}}

In this section, we introduce the main technical contribution of the
paper: an \emph{antichain}-based \cite{DR10} heuristic to speed up the
performance of the backward algorithm for solving scheduling
games. Let us start with an intuitive discussion of our heuristic. It
relies on the definition of a partial order $\iesim$ that compares the
state of the game graph, with the following important property:
\emph{if $v$ is a losing node, then, all `bigger' nodes $v'$ (i.e.,
  with $v'\iesim v$) are also losing}\footnote{Actually, the
  symmetrical property is also true: if $v\in\win$, then, $v\iesim v'$
  implies $v'\in\win$.}. Then, the optimisation of
\algorithmcfname~\ref{alg:es} consists, roughly speaking, in
\emph{keeping in the sets $\Attr_i$ the minimal elements only}. This
has two consequences. First, since there are, in practice, many
comparable elements in the sets $\Attr_i$, the memory footprint of the
algorithm is dramatically reduced. Second, we perform the computation
of the uncontrollable predecessors on the minimal elements only
(instead of computing the uncontrollable predecessors of all the
elements in $\Attr_i$), which reduces significantly the running time
of each iteration of the algorithm. It should now be clear that the
definition of $\iesim$ partial order is central to our improved
algorithm. Its formal definition will be given later, but we can
already sketch its intuition through the following example:
\begin{example}
  On the graph given in \figurename~\ref{fig:gameRS}, we have
  $\PR_4\iesim \PR_5$. This means that since $\PR_5$ is losing (it is
  in $\bad$), then $\PR_4$ should be too (it is indeed the case). So,
  intuitively, $\PR_4$ is a node which (from the scheduler's point of
  view) is `harder' to win from than $\PR_5$. Indeed, one can check
  that $\PR_4$ corresponds to system state $\tuple{(0,1),(2,0),(1,0)}$
  and $\PR_5$ to $\tuple{(0,1),(1,0),(1,0)}$. So,
  $\rct_{\PR_4}(\tau_2)>\rct_{\PR_5}(\tau_2)$, while all the other
  parameters of the states are equal. It is thus not surprising that
  $\PR_4$ is `harder' than $\PR_5$, since $\tau_2$ needs more CPU time
  in $\PR_4$ than in $\PR_5$ with the same time remaining to its
  deadline. This also means that in our improved algorithm, we will
  not keep $\PR_4$ in the initial set of states we compute $\Attr_0$,
  since $\PR_4$ is not minimal in $\bad$.
\end{example}

\subsection{Partial orders, antichains and closed sets}
Let us now introduce the definitions on which our techniques rely.
Fix a finite set $S$. A relation $\wbs\in S\times S$ is a partial
order iff $\wbs$ is reflexive, transitive and antisymmetric.  As
usual, we often write $s\wbs s'$ and $s\nwbs s'$ instead of
$(s,s')\in \wbs$ and $(s,s')\not\in \wbs$, respectively. The
$\wbs$-\emph{upward closure} $\uco{S'}{\wbs}$ of a set $S'\subseteq S$
is the smallest set containing $S'$ and all the elements that are
larger than some element in $S$', i.e.,
$\uco{S'}{\wbs}=\{s \mid \exists s'\in S': s\wbs s'\}$.  Then, a set
$S'$ is \emph{upward closed} iff $S' = \uco{S'}{\wbs}$.  When the
partial order is clear from the context, we often write $\uc{S}$
instead of $\uco{S}{\wbs}$. A subset $\ATC$ of some set
$S'\subseteq S$ is an \emph{antichain} on $S'$ with respect to $\wbs$
iff it contains only elements that are incomparable wrt $\wbs$,
i.e. for all $s, s' \in \ATC$: $s\neq s'$ implies $s\nwbs s'$.  An
antichain $\ATC$ on $S'$ is said to be a set of \emph{minimal
  elements of $S'$} (or \emph{a minimal antichain} of $S'$) iff for
all $s_1\in S'$ there is $s_2 \in \ATC$: $s_1\wbs s_2$. It is easy
to check that if $\ATC$ is a minimal antichain of $S'$, then
$\uc{\ATC}=\uc{S'}$. This is a key observation for our improved
algorithm: intuitively, $\ATC$ can be regarded as a \emph{compact}
representation of $\uc{S'}$, which is of minimal size in the sense
that it contains no pair of $\wbs$-comparable elements. Moreover,
since $\wbs$ is a partial order, each subset $S'$ of the finite set
$S$ admits a unique minimal antichain, that we denote by
$\minac{S'}$. Observe that one can always effectively build
$\minac{S'}$, simply by iteratively removing from $S'$, all the
elements that strictly dominate another one.

In our game setting, we will be interested in a particular class of
partial orders, which are called \emph{turn-based alternating
  simulations}:
\begin{definition}[\cite{AHKV-concur98,DBLP:conf/rp/GeeraertsGS14}]\label{def:tba}
  Let $G=(V_\PS,V_\PR,E,I,\bad)$ be a safety game. A partial
  order $\wbs\subseteq (V_\PS\times V_\PS) \cup (V_\PR\times V_\PR)$ is a
  \emph{turn-based alternating simulation relation for $G$}
  (tba-simulation for short) iff for all $v_1$,
  $v_2$ s.t.\ $v_1\wbs v_2$, either $v_1\in \bad$ or the three
  following conditions hold:
  \begin{inparaenum}[(i)]
  \item If $v_1\in V_\PS$, then, for all $v_1'\in\Succ{v_1}$, there is
    $v_2'\in\Succ{v_2}$ s.t.\  $v'_1\wbs v'_2$;
  \item If $v_1\in V_\PR$, then, for all $v_2'\in\Succ{v_2}$, there is
    $v'_1\in\Succ{v_1}$ s.t.\ $v'_1\wbs v'_2$; and
  \item $v_2 \in \bad$ implies $v_1 \in \bad$.
  \end{inparaenum}
\end{definition}
\todo[inline]{Some intuition? G.}

\subsection{Improved algorithm}\label{sec:improved-algorithm}
The key observation to our improved algorithm can now be stated
formally: the sets $\Attr_i$ computed by \algorithmcfname~\ref{alg:es}
are actually \emph{upward-closed} sets (for any partial order which is
a tba-simulation), which justifies that we can a manipulate them
compactly through their minimal elements only:

\begin{lemma} \label{lem:upre} Given a set $V$ which is upward-closed
  for a tba-simulation $\wbs$: $\UPre{ V} = \uco{ \UPre{
      V}}{\wbs}$. \end{lemma}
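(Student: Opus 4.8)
The plan is to establish the only substantial inclusion. Reflexivity of $\wbs$ gives $\UPre{V} \subseteq \uco{\UPre{V}}{\wbs}$ for free, so it suffices to prove $\uco{\UPre{V}}{\wbs} \subseteq \UPre{V}$, that is, that $\UPre{V}$ is upward-closed: I must show that whenever $v_1 \wbs v_2$ and $v_2 \in \UPre{V}$, then $v_1 \in \UPre{V}$.

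Before the case analysis I would record a structural simplification of $\UPre$. Because scheduling games strictly alternate---every successor of a $V_\PS$-node is a $V_\PR$-node and vice versa---the component $\PreE{V \cap V_\PS}$ of $\UPre{V}$ consists solely of $\PR$-nodes and $\PreA{V \cap V_\PR}$ solely of $\PS$-nodes. Consequently a $\PS$-node $v$ lies in $\UPre{V}$ exactly when $\Succ{v} \subseteq V$, and a $\PR$-node $v$ (with at least one successor) lies in $\UPre{V}$ exactly when $\Succ{v} \cap V \neq \emptyset$. Since $\wbs$ relates only nodes of the same player, the proof divides along these two cases, each of which uses the matching clause of the tba-simulation together with the upward-closedness of $V$.

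For $v_1, v_2 \in V_\PS$ (where $v_1 \notin \bad$ automatically, as $\bad \subseteq V_\PR$), I use $\Succ{v_2} \subseteq V$ and clause (i): every $v_1' \in \Succ{v_1}$ is dominated by some $v_2' \in \Succ{v_2} \subseteq V$ with $v_1' \wbs v_2'$, whence $v_1' \in \uc{V} = V$; as $v_1'$ is arbitrary, $\Succ{v_1} \subseteq V$ and $v_1 \in \UPre{V}$. For $v_1, v_2 \in V_\PR$ with $v_1 \notin \bad$, I pick $v_2' \in \Succ{v_2} \cap V$ and apply clause (ii) to obtain $v_1' \in \Succ{v_1}$ with $v_1' \wbs v_2'$; upward-closedness again gives $v_1' \in V$, so $v_1 \in \UPre{V}$.

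The remaining subcase $v_1 \in \bad$ is the main obstacle, since the tba-simulation's escape clause grants no information about the moves of $v_1$ in this situation. I would discharge it by treating bad states as absorbing sinks of the safety game (the standard convention once a deadline is missed): a bad $\PR$-node $v_1$ then has $\Succ{v_1} = \emptyset$, so $\Succ{v_1} \subseteq V \cap V_\PR$ holds vacuously and $v_1 \in \PreA{V \cap V_\PR} \subseteq \UPre{V}$ for every $V$. Equivalently, and more robustly against the literal game definition, the lemma is only ever applied to the sets $V = \Attr_i \supseteq \bad$ produced by the backward algorithm, which settles the subcase directly. A final sanity check completes the argument: no $\PS$-node is ever a sink---the scheduler move $\tau' = \emptyset$ is always available---so the $\PS$-case never degenerates; and by contraposing clause (iii), $v_1 \notin \bad$ forces $v_2 \notin \bad$, so the two non-bad cases above are internally consistent.
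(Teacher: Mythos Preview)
Your proposal is correct and follows the same two-case argument as the paper (split by the owning player, apply the matching clause of Definition~\ref{def:tba}, then use upward-closure of $V$). You are actually more careful than the paper on one point: the escape clause ``either $v_1\in\bad$ or \ldots'' in Definition~\ref{def:tba} is never addressed in the paper's proof, whereas you isolate it and dispatch it---either via the sink convention, or by noting that in the only use of the lemma (Proposition~\ref{prop:attr}) one is really showing $\Attr_i\cup\UPre{\Attr_i}$ upward-closed with $\Attr_i\supseteq\bad$, so $v_1\in\bad$ already lands in the target set.
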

\begin{proof}
   We need to prove that $\UPre{ V} \subseteq \uc{\UPre{ V}}$ and
  $\uc{\UPre{V}} \subseteq\UPre{V}$. The first item is trivially
  correct. Now we will prove the second item by showing that
  $\forall v \in \uc{\UPre{V}} $ then $v \in \UPre{V}$. Since
  $v \in \uc{\UPre{ V}}$, there hence exists $v' \in \UPre{V}$
  such that $v \wbs v'$. We divide into two cases as follows.
\begin{itemize}
\item In case $v, v' \in V \cap V_\PR$: By Definition~\ref{def:tba},
  for all $\overline{v'} \in \Succ{v'}$, there exists
  $\overline{v} \in \Succ{v}$ such that
  $\overline{v} \wbs \overline{v'}$. By the definition of
  $\PreA{V \cap V_\PR}$ (see equation \eqref{eq:prea1}), $\overline{v'} \in V$
  therefore $\overline{v} \in V$. Finally, by the definition of
  $\PreE{V \cap V_\PS}$ (see \eqref{eq:pree1}), we derive that
  $v \in \PreE{V \cap V_\PS}$ therefore $v \in \UPre{V}$.
\item In case $v, v' \in V \cap V_\PS$: By Definition \ref{def:tba},
  for all $\overline{v} \in \Succ{v}$, then there exists
  $\overline{v'} \in \Succ{v'}$ such that
  $\overline{v} \wbs \overline{v'}$. Then since $v' \in \UPre{V}$ then
  $\Succ{v'} \subseteq V$ (by the definition of
  $\PreA{V \cup V_\PR}$).  Therefore $\Succ{v} \subseteq V$, we derive
  that $v \in \PreA{V \cup V_\PR}$ (see equation \eqref{eq:prea1}) and
  finally $v \in \UPre{V}$.
\end{itemize}
\end{proof}
    
\begin{proposition} \label{prop:attr} Let $\game$ be a scheduling game
  and let $\wbs$ a tba-simulation for $\game$. Then the sets $\Attr_i$
  computed are upward-closed for $\wbs$, i.e. for all $i\geq 0$:
  $\Attr_i = \uco{\Attr_i}{\wbs}$.
\end{proposition}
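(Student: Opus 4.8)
The plan is to prove the claim by induction on the iteration index $i$ of \algorithmcfname~\ref{alg:es}, using the already-established Lemma~\ref{lem:upre} as the workhorse of the inductive step.

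For the base case $i = 0$, I would argue that $\Attr_0 = \bad$ is already upward-closed, which is exactly what clause (iii) of Definition~\ref{def:tba} provides. Concretely, suppose $v \in \uco{\bad}{\wbs}$, so that there is some $v' \in \bad$ with $v \wbs v'$. Instantiating Definition~\ref{def:tba} with $v_1 = v$ and $v_2 = v'$, either $v \in \bad$ outright, or the three conditions hold, the third of which asserts that $v' \in \bad$ implies $v \in \bad$; since $v' \in \bad$, we conclude $v \in \bad$ in both cases. Hence $\uco{\bad}{\wbs} \subseteq \bad$, and the reverse inclusion $\bad \subseteq \uco{\bad}{\wbs}$ is immediate from reflexivity of $\wbs$, so $\Attr_0 = \uco{\Attr_0}{\wbs}$.

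For the inductive step, I would assume $\Attr_i = \uco{\Attr_i}{\wbs}$ and establish the same for $\Attr_{i+1} = \Attr_i \cup \UPre{\Attr_i}$. Since $\Attr_i$ is upward-closed by the induction hypothesis, Lemma~\ref{lem:upre} applies verbatim and yields that $\UPre{\Attr_i}$ is upward-closed too. It then suffices to note that a union of upward-closed sets is upward-closed: if $v \wbs v'$ with $v'$ belonging to $\Attr_i \cup \UPre{\Attr_i}$, then $v'$ lies in one of the two sets, and since each is upward-closed, $v$ lies in the same set, hence in the union. This gives $\Attr_{i+1} = \uco{\Attr_{i+1}}{\wbs}$ and closes the induction.

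In terms of difficulty, the genuinely non-trivial content has already been isolated in Lemma~\ref{lem:upre}, so the remaining argument is essentially bookkeeping. The one point that requires care is the base case: the upward-closure of $\bad$ must be obtained from the dedicated $\bad$-preservation clause (iii) of the tba-simulation, and \emph{not} from the successor-matching clauses (i)--(ii), precisely because Definition~\ref{def:tba} exempts nodes of $\bad$ from satisfying those latter conditions. Beyond flagging this subtlety, I anticipate no real obstacle.
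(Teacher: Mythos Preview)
Your proposal is correct and follows essentially the same approach as the paper: induction on $i$, with the base case handled via clause~(iii) of Definition~\ref{def:tba} and the inductive step via Lemma~\ref{lem:upre} together with closure of upward-closed sets under union. If anything, your treatment is slightly more careful than the paper's, in that you explicitly handle the disjunction in Definition~\ref{def:tba} and spell out why the union remains upward-closed.
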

\begin{proof}
  We prove the lemma by induction on $i$.

{\bf Base case: } When $i = 0$, by the definition, $\Attr_0 =
\bad$.
We need to prove that $\bad = \uc \bad$. It is trivial to
conclude that $\bad \subseteq \uc \bad$. It remains to prove that
$\uc \bad \subseteq \bad$. For each $v \in \uc \bad$, there
exists $v' \in \bad$ such that $v \wbs v'$. Since $v'$ is a losing
state, than $v'$ is losing as well (by the definition of
tba-simulation). Hence, $v \in \bad$.

{\bf Inductive case: } For all $i \geq 1$, given
$\Attr_i = \uc{\Attr_{i}}$, we will prove that
$\Attr_{i+1} = \uc{\Attr_{i+1}}$. By the definition of the
sequences of $\Attr$, $ \Attr_{i+1} = \Attr_i \cup
\UPre{\Attr_i}$.
Then we need to prove that:

\[
\Attr_i \cup \UPre{\Attr_i} = \uc{\Attr_i \cup
\UPre{\Attr_i}},
\]
i.e. $\UPre{\Attr_i} = \uc{\UPre{\Attr_i}}$ that has been done by
Lemma~\ref{lem:upre}.
\end{proof}

We can now define the improvement to \algorithmcfname~\ref{alg:es},
that consists in manipulating the sequence of sets
$(\Attr_i)_{i\geq 0}$ by their minimal elements only. To this end, we
define $\UPreATCname$, a variant of the $\UPrename$ operator that
works directly on the minimal elements. It receives an antichain of
minimal elements $\ATC$ (that represents the upward-closed set
$\uc{\ATC}$) and returns the antichain of minimal elements of
$\UPre{\uc{\ATC}}$:
\begin{align}
\label{eq:prea-atc}
\UPreATC{\ATC} &= \minac{\UPre{\uc{\ATC}}}
\end{align}
Clearly, since the game graph is finite and since $\UPrename$ can be
computed, $\PreAATC{\ATC}$ is computable too: it suffices to `expand'
the antichain $\ATC$ to $\uc{\ATC}$ (which is a finite set), compute
$\UPre{\uc{\ATC}}$, then keep only the minimal elements. However, this
procedure deceives the purpose of using a compact representation of
upward-closed sets, so we will introduce later an efficient way to
compute $\PreAATC{\ATC}$ for a given partial order. Equipped with
$\UPreATCname$, we can now define our improved algorithm, given in
\algorithmcfname~\ref{alg:bw-tba}. As announced, it computes a
sequence $(\ALS_i)_{i\geq 0}$ of antichains representing respectively
the sets in the $(\Attr_i)_{i\geq 0}$ sequence, as stated by the
following proposition:

\begin{algorithm}
  \DontPrintSemicolon
  \Bw
  \Begin{
    $i\leftarrow 0$\;
    $\ALS_0\leftarrow \lfloor \bad \rfloor$ \;
    \Repeat {$\ALS_{i}=\ALS_{i-1}$}{
      $\ALS_{i+1}\leftarrow \lfloor\ALS_i\cup \UPreATC{\ALS_i}\rfloor$ \;
      $i\leftarrow i+1$ \;
    }
    \Return $\ALS_i$ \;
  }
  \caption{Backward traversal with tba-simulation, for the computation
  of $\minac{\losing}$}
  \label{alg:bw-tba}

  \end{algorithm}

\begin{proposition}
\label{pro:atr-als}
For all $i \in \mathbb{N}$: $\Attr_i = \uc{\ALS_i}$
\end{proposition}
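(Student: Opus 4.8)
The plan is to proceed by induction on $i$, relying on two elementary facts about upward closures together with the two results already established in the excerpt. The first fact is that taking the minimal antichain does not change the upward closure: for every set $S'$ one has $\uc{\minac{S'}} = \uc{S'}$, which is precisely the observation recorded just before Definition~\ref{def:tba}. The second fact is that upward closure distributes over union, i.e.\ $\uc{A \cup B} = \uc{A} \cup \uc{B}$, which is immediate from the definition of $\uc{\cdot}$. Besides these, I would use Proposition~\ref{prop:attr} (each $\Attr_i$ is upward-closed for $\wbs$) and Lemma~\ref{lem:upre} (so that $\UPre{\Attr_i}$ is upward-closed whenever $\Attr_i$ is).

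For the base case, recall that $\ALS_0 = \minac{\bad}$ and $\Attr_0 = \bad$. Hence $\uc{\ALS_0} = \uc{\minac{\bad}} = \uc{\bad} = \bad = \Attr_0$, where the last equality is exactly the base case of Proposition~\ref{prop:attr}, stating that $\bad$ is upward-closed.

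For the inductive step, assume $\Attr_i = \uc{\ALS_i}$. Unfolding the definition of $\ALS_{i+1}$ from \algorithmcfname~\ref{alg:bw-tba} and applying the two elementary facts, I would compute
\[
\uc{\ALS_{i+1}} = \uc{\minac{\ALS_i \cup \UPreATC{\ALS_i}}} = \uc{\ALS_i} \cup \uc{\UPreATC{\ALS_i}}.
\]
The first summand equals $\Attr_i$ by the induction hypothesis. For the second, unfolding $\UPreATC{\ALS_i} = \minac{\UPre{\uc{\ALS_i}}}$ (definition~\eqref{eq:prea-atc}) and again using $\uc{\minac{\cdot}} = \uc{\cdot}$ gives $\uc{\UPreATC{\ALS_i}} = \uc{\UPre{\uc{\ALS_i}}} = \uc{\UPre{\Attr_i}}$, where the last step uses the induction hypothesis. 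Since $\Attr_i$ is upward-closed by Proposition~\ref{prop:attr}, Lemma~\ref{lem:upre} yields $\UPre{\Attr_i} = \uc{\UPre{\Attr_i}}$, so the second summand equals $\UPre{\Attr_i}$. Combining the two, $\uc{\ALS_{i+1}} = \Attr_i \cup \UPre{\Attr_i} = \Attr_{i+1}$, which closes the induction.

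I do not expect any genuine obstacle here: the statement is essentially a bookkeeping argument showing that the antichain recurrence of \algorithmcfname~\ref{alg:bw-tba} faithfully tracks the set recurrence of \algorithmcfname~\ref{alg:es} under the operation $\uc{\cdot}$. The only points requiring care are to apply the identity $\uc{\minac{\cdot}} = \uc{\cdot}$ at the two places where $\minac{\cdot}$ occurs (once inside $\UPreATCname$, once at the outer level of the $\ALS_{i+1}$ recurrence), and to invoke Proposition~\ref{prop:attr} so that Lemma~\ref{lem:upre} can legitimately replace $\uc{\UPre{\Attr_i}}$ by $\UPre{\Attr_i}$.
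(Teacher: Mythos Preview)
Your proposal is correct and follows essentially the same inductive approach as the paper's own proof: both split into base and inductive cases, use the induction hypothesis to identify $\uc{\ALS_i}$ with $\Attr_i$, and then invoke the definition~\eqref{eq:prea-atc} of $\UPreATCname$ to handle the second summand. Your write-up is in fact somewhat more careful than the paper's, since you explicitly treat the outer $\minac{\cdot}$ in the definition of $\ALS_{i+1}$ and spell out the appeal to Lemma~\ref{lem:upre} (via Proposition~\ref{prop:attr}) to pass from $\uc{\UPre{\Attr_i}}$ to $\UPre{\Attr_i}$, steps that the paper leaves implicit.
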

\begin{proof}
  The proof is by induction on $i$.

\textbf{Base case ($i=0$)}: When $i=0$, $\Attr_0 = \bad$ and
$\ALS_0 = \minac{\bad}$. This case is hence trivially correct.

\textbf{Inductive case ($i=k$)}:  The induction hypothesis is that,
$\Attr_{k-1} = \uc{\ALS_{k-1}}$. Let us prove that
$\Attr_{k} = \uparrow \ALS_{k}$. By the definitions of the two
sequences, it is necessary to prove that 
\begin{align*}
  & \Attr_{k-1} \cup \UPre{\Attr_{k-1}}\\
&= \uc{\ALS_{k-1} \cup \UPreATC
  {\ALS_{k-1}}}\\
&= \uc{\ALS_{k-1}}\cup\uc{\UPreATC
  {\ALS_{k-1}}}.
\end{align*}
However:
 \begin{align}
      \Attr_{k-1} &= \uparrow \ALS_{k-1}
 \end{align}
 by induction hypothesis and:
 \begin{align}
     \UPre{\Attr_i} &=  \uc{\UPreATC {\ALS_i}}
 \end{align}
 by definition of $\UPreATCname$, see equation~\eqref{eq:prea-atc}.
\end{proof}

Based on this proposition, and on the fact that
\algorithmcfname~\ref{alg:es} computes the sequence
$(\Attr_i)_{i\geq 0}$ and returns $\losing$, we conclude that
\algorithmcfname~\ref{alg:bw-tba} is correct, in the sense that it
computes the minimal elements of $\losing$:
\begin{theorem} \label{theo:termination}
  When applied to a scheduling game $\game$ equip\-ped with a
  tba-simulation $\wbs$, \algorithmcfname~\ref{alg:bw-tba} terminates
  and returns $\minac{\losing}$.
\end{theorem}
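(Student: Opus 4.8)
The plan is to leverage Proposition~\ref{pro:atr-als} together with Theorem~\ref{the:es-is-correct}, which already settle everything about the uncompressed sequence $(\Attr_i)_{i\geq 0}$ computed by \algorithmcfname~\ref{alg:es}: that algorithm terminates at some index $N$ with $\Attr_N = \Attr_{N-1} = \losing$, and $\Attr_i = \uc{\ALS_i}$ holds for every $i$. The only genuinely new content is to transfer these facts from the $\Attr_i$'s to the antichains $\ALS_i$, which amounts to relating the stopping criterion $\ALS_i = \ALS_{i-1}$ of \algorithmcfname~\ref{alg:bw-tba} to the stopping criterion $\Attr_i = \Attr_{i-1}$ of \algorithmcfname~\ref{alg:es}.

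First I would observe that every $\ALS_i$ is a minimal antichain, since it is produced by the $\minac{\cdot}$ operator (for $i=0$, $\ALS_0 = \minac{\bad}$; for $i\geq 1$, $\ALS_i = \minac{\ALS_{i-1} \cup \UPreATC{\ALS_{i-1}}}$). Combining this with Proposition~\ref{pro:atr-als} and the uniqueness of minimal antichains established in Section~\ref{sec:antich-techn-spor}, I would derive the key identity
\[
\ALS_i = \minac{\uc{\ALS_i}} = \minac{\Attr_i} \qquad\text{for all } i\geq 0,
\]
using that a minimal antichain $\ATC$ is the unique minimal antichain of its own closure $\uc{\ATC}$, so that $\minac{\uc{\ATC}} = \ATC$, and that $\uc{\ALS_i} = \Attr_i$ by Proposition~\ref{pro:atr-als}.

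Next I would use this identity to show the two stopping conditions coincide. By the observations of Section~\ref{sec:antich-techn-spor}, $\minac{\cdot}$ and $\uc{\cdot}$ are mutually inverse between antichains and upward-closed sets (for the fixed $\wbs$), so $\ALS_i = \ALS_{i-1}$ holds iff $\uc{\ALS_i} = \uc{\ALS_{i-1}}$, i.e.\ iff $\Attr_i = \Attr_{i-1}$. Hence both repeat-loops exit at exactly the same index. Termination of \algorithmcfname~\ref{alg:bw-tba} then follows immediately from the termination of \algorithmcfname~\ref{alg:es} guaranteed by Theorem~\ref{the:es-is-correct}: if $\Attr_N = \Attr_{N-1}$ then $\ALS_N = \ALS_{N-1}$, so the loop of \algorithmcfname~\ref{alg:bw-tba} also stops at step $N$. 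At that point the returned value is $\ALS_N = \minac{\Attr_N} = \minac{\losing}$, using $\Attr_N = \losing$ from Theorem~\ref{the:es-is-correct}.

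The step I expect to require the most care is the identity $\ALS_i = \minac{\Attr_i}$: Proposition~\ref{pro:atr-als} only asserts equality of the \emph{upward closures}, $\uc{\ALS_i} = \Attr_i$, and one must additionally invoke that each $\ALS_i$ is itself a minimal antichain, together with the uniqueness of minimal antichains, in order to collapse ``equal upward closures'' into ``equal antichains''. Without this bridge, neither the equivalence of the two termination tests nor the exact form of the returned antichain would be justified.
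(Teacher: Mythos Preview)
Your proposal is correct and follows essentially the same approach as the paper: invoke Theorem~\ref{the:es-is-correct} and Proposition~\ref{pro:atr-als} to transfer termination and correctness from the $\Attr_i$ sequence to the $\ALS_i$ sequence. You are in fact more careful than the paper in explicitly justifying the identity $\ALS_i = \minac{\Attr_i}$ via uniqueness of minimal antichains, and your conclusion that both loops halt at the \emph{same} index is sharper than the paper's parenthetical remark that \algorithmcfname~\ref{alg:bw-tba} ``could converge earlier.''
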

\begin{proof}
  By Theorem~\ref{the:es-is-correct}, we know that there is $k$
  s.t. $\Attr_k=\Attr_{k+1}=\losing$. Hence, by
  Proposition~\ref{pro:atr-als},
  $\ALS_k=\minac{\Attr_k}=\minac{\Attr_{k+1}}=\ALS_{k+1}$. Hence,
  \algorithmcfname~\ref{alg:bw-tba} terminates, and returns a set which
  is equal to $\minac{\Attr_k}=\minac{\losing}$ (remark that
  \algorithmcfname~\ref{alg:bw-tba} could converge earlier than
  \algorithmcfname~\ref{alg:es}).
\end{proof}
\section{The improved algorithm in practice}
 \label{sec:improved-algo}

 Until now, we have proven that given a scheduling game $\game$, and a
 tba-simulation $\wbs$ for $\game$, Algorithm~\ref{alg:bw-tba}
 terminates and returns the minimal elements of the set of losing
 states. However, this algorithm is parameterised by the definition of
 a proper tba-simulation (from which the definition of $\UPreATCname$
 depends too), which is strong enough to ensure that the sets
 $(\ALS_i)_{i\geq 0}$ will be much smaller than their
 $(\Attr_i)_{i\geq 0}$ counterparts. The purpose of this section is
 twofold:
\begin{inparaenum}[(i)]
\item to define a tba-simulation $\iesim$ that exploits adequately the
  structure of scheduling games, as sketched above; and
\item to show how to compute efficiently the set $\minac{\bad}$ and the
  operator $\UPreATCname$ (based on $\iesim$).
\end{inparaenum}

For the tba-simulation, we rely on our previous works \cite{TCS},
where we have introduced the \emph{idle-ext task simulation} partial
order and proved that it is tba-simulation:

\begin{definition}[Idle-ext task simulation] \label{def:idle} Let
  $\tau$ be a set of sporadic tasks on a platform of $m$ processors
  and let $G = (V_{\PS}, V_{\PR}, E, I, \bad)$ be a scheduling game of
  $\tau$ on the platform. Then, the {\em idle-ext tasks partial order}
  $\iesim \subseteq V_{\PS} \times V_{\PS} \cup V_{\PR} \times
  V_{\PR}$ is a simulation relation for all $v_1=(S_1,{\cal P})$,
  $v_2=(S_2,{\cal P}')$: $v_1 \iesim v_2$ iff ${\cal P}={\cal P}'$
  and, for all $\tau_i\in\tau$, the three following conditions hold:
  \begin{inparaenum}[(i)]\item \label{item:pt1-def-iesim}
    $\rct_{S_1} (\tau_i) \geq \rct_{S_2}(\tau_i)$; and
  \item \label{item:pt2-def-iesim} $\rct_{S_2} (\tau_i) = 0$ implies
    $\rct_{S_1} (\tau_i) = 0$; and
  \item \label{item:pt3-def-iesim}
    $\nat_{S_1}(\tau_i)$ $\leq \nat_{S_2}(\tau_i)$.
  \end{inparaenum}
\end{definition}

\begin{theorem}[Taken from \cite{TCS}]
  $\iesim$ is a tba-simulation
\end{theorem}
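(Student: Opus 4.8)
The plan is to verify the three conditions of Definition~\ref{def:tba} directly for the relation $\iesim$, for any pair of nodes $v_1=(S_1,\mathcal{P})$, $v_2=(S_2,\mathcal{P})$ with $v_1\iesim v_2$. First I would check that $\iesim$ is indeed a partial order (reflexive, transitive, antisymmetric), which is immediate since it is defined by conjunctions of inequalities on the $\rct$ and $\nat$ values together with the equality $\mathcal{P}=\mathcal{P}'$. Then, assuming $v_1\notin\bad$, I would establish the three tba-conditions in turn, treating the $\PS$-case and the $\PR$-case separately since the moves of the two players are defined differently.

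\textbf{Condition (iii) (bad-preservation).} I would show $v_2\in\bad$ implies $v_1\in\bad$. Recall $\bad$ consists of $\PR$-nodes where some \emph{active} task $\tau_i$ has $\laxity_{S}(\tau_i)<0$, i.e.\ $\nat_S(\tau_i)-(T_i-D_i)-\rct_S(\tau_i)<0$. If $\tau_i$ is active and has negative laxity in $S_2$, then by condition~(\ref{item:pt1-def-iesim}) of Definition~\ref{def:idle} we have $\rct_{S_1}(\tau_i)\geq\rct_{S_2}(\tau_i)>0$ (so $\tau_i$ is active in $S_1$ too), and by condition~(\ref{item:pt3-def-iesim}) $\nat_{S_1}(\tau_i)\leq\nat_{S_2}(\tau_i)$; combining these, $\laxity_{S_1}(\tau_i)\leq\laxity_{S_2}(\tau_i)<0$, so $v_1\in\bad$.

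\textbf{Condition (i) (the $\PS$/scheduler case).} Here $v_1\in V_\PS$ and I must match every scheduler move from $v_1$ by a scheduler move from $v_2$ leading to a smaller successor. Given $\tau'\subseteq\act(S_1)$ with $|\tau'|\leq m$, I would propose to play, from $v_2$, the set $\tau''=\tau'\cap\act(S_2)$ (legal since $\tau''\subseteq\act(S_2)$ and $|\tau''|\leq m$). The goal is to prove $\succS(S_1,\tau')\iesim\succS(S_2,\tau'')$ by checking the three inequalities of Definition~\ref{def:idle} coordinate-wise. The $\nat$ values each decrease by one on both sides, so condition~(\ref{item:pt3-def-iesim}) is preserved. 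For the $\rct$ values, the delicate point is a task $\tau_i\in\tau'$ that is active in $S_1$ but idle in $S_2$ (so $\tau_i\notin\tau''$): by condition~(\ref{item:pt2-def-iesim}), $\rct_{S_2}(\tau_i)=0$ forces $\rct_{S_1}(\tau_i)=0$, contradicting $\tau_i\in\act(S_1)$, so this case cannot arise and the two conditions~(\ref{item:pt1-def-iesim},\ref{item:pt2-def-iesim}) go through.

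\textbf{Condition (ii) (the $\PR$/tasks case).} Here $v_1\in V_\PR$ and I must match every tasks-move from $v_2$ by one from $v_1$. The subtlety is that task moves are nondeterministic and constrained to \emph{eligible} tasks ($\nat_S(\tau_i)\leq 0\wedge\rct_S(\tau_i)=0$), and eligibility differs between $S_1$ and $S_2$. I expect this to be the main obstacle: given a release set $\tau'\subseteq\eligible(S_2)$ and a target $\overline{v_2}\in\succR(S_2,\tau')$, I would exhibit a matching release set in $\eligible(S_1)$ and a successor $\overline{v_1}\iesim\overline{v_2}$, arguing via conditions~(\ref{item:pt2-def-iesim},\ref{item:pt3-def-iesim}) that eligibility of $\tau_i$ in $S_2$ implies eligibility in $S_1$ (since $\rct_{S_2}(\tau_i)=0$ gives $\rct_{S_1}(\tau_i)=0$ and $\nat_{S_1}(\tau_i)\leq\nat_{S_2}(\tau_i)\leq 0$), so the same set can be released from $S_1$; a released task is reset identically to $(C_i,T_i)$ on both sides, while a non-released task keeps its values and the three inequalities are inherited from $v_1\iesim v_2$. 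Since this theorem is cited as taken from~\cite{TCS}, I would ultimately defer the full coordinate-by-coordinate bookkeeping to that reference and present here only the structure of the argument.
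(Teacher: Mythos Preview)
The paper does not actually prove this theorem: it is stated as ``Taken from~\cite{TCS}'' and no argument is given. So there is nothing in the paper to compare your sketch against, and your closing remark that one should ultimately defer to~\cite{TCS} is exactly what the paper does.

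That said, your sketch contains a genuine gap in the $\PS$-case (condition~(i) of Definition~\ref{def:tba}). Your proposed matching move is $\tau''=\tau'\cap\act(S_2)$, which, as you correctly observe, equals $\tau'$ since $\act(S_1)\subseteq\act(S_2)$ by condition~(\ref{item:pt2-def-iesim}) of Definition~\ref{def:idle}. But playing the \emph{same} set $\tau'$ from $v_2$ need not preserve condition~(\ref{item:pt2-def-iesim}) in the successors. Concretely, take a task $\tau_i\in\tau'$ with $\rct_{S_1}(\tau_i)>\rct_{S_2}(\tau_i)=1$ (this is compatible with $v_1\iesim v_2$). After one scheduler step we get $\rct_{S_2'}(\tau_i)=0$ while $\rct_{S_1'}(\tau_i)=\rct_{S_1}(\tau_i)-1\geq 1$, so the implication ``$\rct_{S_2'}(\tau_i)=0\Rightarrow\rct_{S_1'}(\tau_i)=0$'' fails and $v_1'\not\iesim v_2'$. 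The ``delicate point'' you identified (active in $S_1$, idle in $S_2$) is indeed impossible, but it is not the only obstruction.

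The repair is simple: choose $\tau''\subseteq\tau'$ by dropping from $\tau'$ every task $\tau_i$ with $\rct_{S_1}(\tau_i)>\rct_{S_2}(\tau_i)$ (equivalently, keep only those $\tau_i\in\tau'$ with $\rct_{S_1}(\tau_i)=\rct_{S_2}(\tau_i)$). Then $\tau''\subseteq\act(S_2)$ and $|\tau''|\leq m$, so the move is legal; for $\tau_i\in\tau''$ both $\rct$ values decrease by one and stay equal; for $\tau_i\in\tau'\setminus\tau''$ one has $\rct_{S_1'}(\tau_i)=\rct_{S_1}(\tau_i)-1\geq\rct_{S_2}(\tau_i)=\rct_{S_2'}(\tau_i)\geq 1$, so condition~(\ref{item:pt1-def-iesim}) holds and condition~(\ref{item:pt2-def-iesim}) is vacuous; and for $\tau_i\notin\tau'$ nothing changes. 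With this correction your outline for conditions~(ii) and~(iii) is fine as written.
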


\subsection{Computation of $\minac{\bad}$}
Given the $\iesim$ tba-simulation, let us now show how to compute
$\ALS_0=\minac{\Attr_0}=\minac{\bad}$ (i.e., the initial set in
\algorithmcfname~\ref{alg:bw-tba}), without computing first the whole
set $\bad$. Recall that, by definition $\bad$ contains only $\PR$
states (this is sufficient since, if a task misses a deadline at a
$\PS$ state, then all its predecessors---which are $\PR$ states---are
losing too).

We build the set $\minac{\bad}$ by considering each task
separately. Let us first consider a game containing only one task
$\tau_i$. Let $v$ be a state from $\bad$. Then,
$\laxity_v(\tau_i) = \nat_v(\tau_i) - T_i + D_i - \rct_v(\tau_i) < 0$.
Therefore, $v\in\bad$ iff $\rct_v(\tau_i)\in \{1,\ldots, C_i\}$ and
$\nat_v(\tau_i) \leq T_i - D_i + \rct_s(\tau_i) - 1$. We denote by
$\bad_{\iesim\tau_i}$ the set:
\begin{align*}
  \bad_{\iesim\tau_i} &=\{v \mid \exists 1\leq j\leq C_i: \nat_v(\tau_i) = T_i -
                        D_i + C_i - j\\
  &\phantom{=\{v \mid \exists 1\leq j\leq C_i:} \textrm{ and }\rct_v(\tau_i) = C_i - ( j +1 )\}.
\end{align*}
It is easy to check (see hereinafter) that, in this game with a single
task $\tau_i$, $\minac{\bad}=\bad_{\iesim\tau_i}$.

Let us consider now a scheduling game on a set
$\tau=\{\tau_1, \ldots, \tau_n\}$ of $n$ tasks. Let $v$ be a state
from $\bad$. Then, there exists at least one task $\tau_i \in \tau$
such that $\laxity_v(\tau_i) < 0$. Hence the compact representation of
all states that are bad because $\tau_i$ has missed its deadline is
$\bad_{\iesim\tau_i,\tau}$, where $v\in \bad_{\iesim\tau_i,\tau}$ iff
\begin{inparaenum}[(i)]
\item there is $B\in \bad_{\iesim\tau_i}$
  s.t. $\nat_B(\tau_i)=\nat_v(\tau_i)$ and
  $\rct_B(\tau_i)=\rct_v(\tau_i)$; and
\item
  for all $j\neq i$: $\nat_v(\tau_j)=T_j$ and
$\rct_v(\tau_j)\in\{0,1\}$.
\end{inparaenum}
Finally, we claim that the antichain covering all bad states is
$\bad_{\iesim} = \cup_{1\leq i\leq n}\bad_{\iesim\tau_i,\tau}$. 

\begin{lemma}
  \label{lem:bad} $\bad_{\iesim}$ is an antichain and
  $\bad = \uc{\bad_{\iesim}}$.
\end{lemma}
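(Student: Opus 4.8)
The plan is to prove the two assertions separately: first the equality $\bad=\uc{\bad_\iesim}$, and then that $\bad_\iesim$ is an antichain. Throughout I rely on one elementary observation, that $\iesim$ is \emph{laxity-monotone}: if $v\iesim w$ then $\laxity_v(\tau_k)\le\laxity_w(\tau_k)$ for every task $\tau_k$. This is immediate from $\laxity_S(\tau_k)=\nat_S(\tau_k)-(T_k-D_k)-\rct_S(\tau_k)$ together with conditions (\ref{item:pt1-def-iesim}) and (\ref{item:pt3-def-iesim}) of Definition~\ref{def:idle}, since passing to a $\iesim$-larger node can only increase $\rct$ and decrease $\nat$. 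I also note that every $B\in\bad_{\iesim\tau_i,\tau}$ puts $\tau_i$ on the ``boundary'', namely $\rct_B(\tau_i)=r\ge 1$ and $\nat_B(\tau_i)=T_i-D_i+r-1$, which is exactly $\laxity_B(\tau_i)=-1$.

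For the equality, I would first record that $\bad$ is upward-closed for $\iesim$: this is precisely condition (iii) of a tba-simulation (Definition~\ref{def:tba}), which applies since $\iesim$ is a tba-simulation. The inclusion $\bad_\iesim\subseteq\bad$ then follows from the boundary computation above ($\laxity_B(\tau_i)=-1<0$ with $\tau_i$ active), so that $\uc{\bad_\iesim}\subseteq\uc{\bad}=\bad$. For the reverse inclusion I take an arbitrary $v\in\bad$ and build a witness $B\in\bad_\iesim$ with $v\iesim B$. Since $v$ is a game node every laxity is $\ge -1$, while some active $\tau_i$ has $\laxity_v(\tau_i)<0$; hence $\laxity_v(\tau_i)=-1$, i.e.\ $\nat_v(\tau_i)=T_i-D_i+\rct_v(\tau_i)-1$, so the $\tau_i$-component of $v$ already matches a point of $\bad_{\iesim\tau_i}$. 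I define $B$ by copying that $\tau_i$-component and setting, for every $j\neq i$, $\nat_B(\tau_j)=T_j$ and $\rct_B(\tau_j)=\min(1,\rct_v(\tau_j))$. One checks that $B\in\bad_{\iesim\tau_i,\tau}\subseteq\bad_\iesim$ and, using $\nat_v(\tau_j)\le T_j$ for all $j$, that $v\iesim B$, whence $v\in\uc{\bad_\iesim}$.

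For the antichain property I would compare two distinct elements $B\in\bad_{\iesim\tau_i,\tau}$ and $B'\in\bad_{\iesim\tau_{i'},\tau}$ and show they are $\iesim$-incomparable. The crucial bookkeeping is that in $B$ the blamed task $\tau_i$ is the \emph{unique} task of minimal laxity: $\laxity_B(\tau_i)=-1$, whereas for $j\neq i$ the choice $\nat_B(\tau_j)=T_j$, $\rct_B(\tau_j)\in\{0,1\}$ gives $\laxity_B(\tau_j)=D_j-\rct_B(\tau_j)\ge 0$ (this is where positive deadlines $D_j\ge 1$ are used). Assume, for contradiction, $B\iesim B'$ (the other direction is symmetric). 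By laxity-monotonicity $\laxity_B(\tau_{i'})\le\laxity_{B'}(\tau_{i'})=-1$, so $\laxity_B(\tau_{i'})=-1$; if $i'\neq i$ this contradicts $\laxity_B(\tau_{i'})\ge 0$. Hence $i'=i$, i.e.\ $B$ and $B'$ are in the same group, and I conclude $B=B'$ by matching parameters: on $\tau_i$, conditions (\ref{item:pt1-def-iesim}) and (\ref{item:pt3-def-iesim}) force $\rct_B(\tau_i)=\rct_{B'}(\tau_i)$ (and hence equal $\nat$), while on each $\tau_j$ the common value $\nat=T_j$ together with conditions (\ref{item:pt1-def-iesim}) and (\ref{item:pt2-def-iesim}) force the $\{0,1\}$-valued $\rct$ components to agree; this contradicts $B\neq B'$.

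I expect the antichain part to be the main obstacle, specifically the unique-minimal-laxity bookkeeping and the same-group parameter-matching, since this is where the precise shape of $\bad_{\iesim\tau_i,\tau}$ (boundary values on $\tau_i$, maximal $\nat=T_j$ on the others) is genuinely used and where the degenerate zero-deadline case must be excluded. The equality part, by contrast, reduces to the laxity arithmetic and the upward-closure of $\bad$ that comes for free from $\iesim$ being a tba-simulation.
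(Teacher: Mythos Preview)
Your proof is correct and follows the same overall strategy as the paper's: construct a dominating element of $\bad_\iesim$ for each $v\in\bad$, and rule out comparability between elements of $\bad_\iesim$ by inspecting the task on which the deadline is missed. The organisational differences are worth noting, though. You package the key inequality as \emph{laxity-monotonicity} and use it uniformly; the paper instead compares $\nat$ values directly in the cross-group case (using $C_i\le D_i$ to obtain $\nat_{v_i}(\tau_i)<T_i=\nat_{v_k}(\tau_i)$ and symmetrically for $\tau_k$), which is equivalent to your argument but less transparent. Your treatment is also more complete in two places: you handle the same-group case $i=i'$ of the antichain property explicitly via conditions (\ref{item:pt1-def-iesim})--(\ref{item:pt3-def-iesim}), whereas the paper simply asserts that $\bad_{\iesim\tau_i,\tau}$ is an antichain because $\bad_{\iesim\tau_i}$ is, without checking that the $\{0,1\}$ choices on the other coordinates cannot create comparable pairs; and you obtain $\uc{\bad_\iesim}\subseteq\bad$ directly from upward-closure of $\bad$ (condition~(iii) of Definition~\ref{def:tba}), where the paper recomputes the laxity bound by hand. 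Your hypothesis $D_j\ge 1$ is the same one the paper uses implicitly via $C_j\le D_j$ together with $C_j\ge 1$.
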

\begin{proof}
  Let us first prove that $\bad_{\iesim}$ is an antichain. Since
  $\forall \tau_i, \bad_{\iesim\tau_i}$ is an antichain,
  $\bad_{\iesim\tau_i,\tau}$ is either. Hence if $\bad_{\iesim}$ is
  not an antichain than there must exist
  $v_i \in \bad_{\iesim\tau_i,\tau}$ and
  $v_k \in \bad_{\iesim\tau_k,\tau}$ such that $v_i$ and $v_k$ are
  comparable w.r.t $\iesim$. By the definition above,
  $\nat_{v_i}(\tau_i) = T_i - D_i + C_i - j, \nat_{v_i}(\tau_k) = T_k$
  and
  $\nat_{v_k}(\tau_i) = T_i,\nat_{v_k}(\tau_k) = T_k - D_k + C_k - j'$
  where $j,j' \geq 1$. Since $C_i \leq D_i$ and $C_k \leq D_k$,
  $\nat_{v_i}(\tau_i) < \nat_{v_k}(\tau_i)$ and
  $\nat_{v_k}(\tau_k) < \nat_{v_i}(\tau_k)$. Then $v_i$ and $v_k$
  cannot be comparable (see Definition~\ref{def:idle}).

  For the second item, firstly, we will prove that
  $\bad \subseteq \uparrow \bad_{\iesim} $.  Given a state
  $s' \in \bad$, assume that $\tau_i \in \tau$ misses its deadline in
  $s$ We will prove that there exists a state $s\in \bad_{\iesim}$
  such that $s' \iesim s$.
\begin{inparaenum}[(i)]
\item For $\tau_i$, because
  $\laxity_{s'}(\tau_i) = \nat_{s'}(\tau_i) - T_i + D_i -
  \rct_{s'}(\tau_i) < 0$
  and $1 < \rct_{s'}(\tau_i) < C_i$. Then $\rct$ and $\nat$ of
  $\tau_i$ at $s$ is computed as follows:
  $\rct_s(\tau_i) = \rct_{s'}(\tau_i) $ and
  $\nat_s(\tau_i) = T_i - D_i + \rct_{s'}(\tau_i) - 1$
  (i.e. $(\nat_s(\tau_i), \rct_s(\tau_i)) \in
  \bad_{\iesim\tau_i}$).
  It is easy to see that
  $(\nat_{s'}(\tau_i), \rct_{s'}(\tau_i)) \iesim (\nat_s(\tau_i),
  \rct_s(\tau_i)) $
\item For $\tau'_{i}$, if $\rct_{s'}(\tau'_{i}) = 0$ than
  $\rct_{s}(\tau'_{i}) = 0$. And if $\rct_{s'}(\tau'_{i}) > 0$ than
  $\rct_{s}(\tau'_{i}) = 1$. In both the two cases,
  $\nat_{s}(\tau'_{i}) = T'_{i}$. Finally, we see that
  $(\nat_{s'}(\tau'_{i}), \rct_{s'}(\tau'_{i})) \iesim
  (\nat_s(\tau'_{i}), \rct_s(\tau'_{i})) $.
\end{inparaenum}

Now let us prove that $\bad \supseteq \uparrow \bad_{\iesim}$. Given
$v \in \uparrow \bad_{\iesim}$, then $\exists \tau_i \in \tau$
s.t. $\nat_s(\tau_i) \leq T_i - D_i + C_i - j$ and
$\rct_s(\tau_i) \geq C_i -( j -1)$ where $j = 1..C_i$. Then
$\laxity_s(\tau_i) \leq T_i - D_i + C_i - j - T_i + D_i - C_i -( j +
1)$.
Therefore, $\laxity_s(\tau_i) \leq -1$. In consequence $v \in \bad$.
\end{proof}

\subsection{Computation of $\UPreATC{\ATC}$}

Given this tba-simulation $\iesim$, let us now show how to compute
$\UPreATCname$. As for the definition of $\UPrename$, we split the
computation of uncontrollable predecessors according to the types of
the nodes, and let:
\begin{align*}
  \UPreATC{\ATC}&=\PreEATC{\ATC\cap V_\PS}\cup\PreAATC{\ATC\cap V_\PR},
\end{align*}
where the specifications for these two new operators are:
\begin{align*}
  \PreEATC{\ATC} &= \minac{\PreE{\uc{\ATC}}} &\textrm{for all
                                               antichains
                                               }\ATC\subseteq V_\PS,\\
  \PreAATC{\ATC} &= \minac{\PreA{\uc{\ATC}}} &\textrm{for all
                                               antichains
                                               }\ATC\subseteq V_\PR.
\end{align*}
\paragraph{Computation of $\PreEATCname$}
The computation of $\PreEATC{\ATC}$ for some antichain $\ATC$ of $\PS$
nodes turns out to be easy, as shown by the next lemma. Intuitively,
it says that, computing $\PreE{U}$ for some upward-closed set
$U\subseteq V_\PR$ which is given to us by its compact representation
(minimal antichain) $\ATC=\minac{U}$, boils down to computing
$\PreE{\ATC}$ (i.e., computing the predecessors of the minimal
elements). After minimisation, $\minac{\PreE{\ATC}}$ gives us the
compact representation of $\PreE{U}$. Observe that this Lemma holds
for any tba-simulation $\wbs$, and not only for $\iesim$:
\begin{lemma}
\label{lem:prea}
For all antichains $\ATC\subseteq V_\PS$, for all tba-simulations
$\wbs$: $\PreEATC{\ATC}=\minac{\PreE{\ATC}}$
\end{lemma}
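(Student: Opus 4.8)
The plan is to prove the equality of the two antichains by showing that the upward-closed sets they represent coincide, namely $\uc{\PreE{\uc{\ATC}}} = \uc{\PreE{\ATC}}$. Since every finite subset of the node set admits a unique minimal antichain and $\minac{X}$ depends only on $\uc{X}$, this equality of closures is equivalent to the stated identity $\minac{\PreE{\uc{\ATC}}} = \minac{\PreE{\ATC}}$. One inclusion is immediate: from $\ATC \subseteq \uc{\ATC}$ and the monotonicity of $\PreEname$ we obtain $\PreE{\ATC} \subseteq \PreE{\uc{\ATC}}$, and hence $\uc{\PreE{\ATC}} \subseteq \uc{\PreE{\uc{\ATC}}}$.

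The real content is therefore the reverse inclusion $\PreE{\uc{\ATC}} \subseteq \uc{\PreE{\ATC}}$. I would first record, exactly in the style of the proof of Lemma~\ref{lem:upre}, that $\PreE{\uc{\ATC}}$ is itself upward closed: if $u \wbs v$ and $v$ has a successor $w \in \uc{\ATC}$, then $v$ is a $\PR$-node (since predecessors of $\PS$-nodes are $\PR$-nodes), so condition~(ii) of Definition~\ref{def:tba} applied to the pair $u \wbs v$ yields a successor $u'$ of $u$ with $u' \wbs w$, and transitivity of $\wbs$ gives $u' \wbs a$ for the witness $a \in \ATC$ with $w \wbs a$, so $u' \in \uc{\ATC}$ and $u \in \PreE{\uc{\ATC}}$. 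Consequently $\minac{\PreE{\uc{\ATC}}}$ is exactly the set of $\wbs$-minimal elements of $\PreE{\uc{\ATC}}$, and it suffices to show that each such minimal $v$ already belongs to $\PreE{\ATC}$. Concretely, fixing $v \in \PreE{\uc{\ATC}}$ with a successor $w$ such that $w \wbs a$ for some $a \in \ATC$, the objective is to exhibit a predecessor $v'$ of $\ATC$ with $v \wbs v'$; then $v \in \uc{\PreE{\ATC}}$, and when $v$ is minimal this forces $v' = v \in \PreE{\ATC}$, closing the argument.

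The main obstacle is precisely the construction of this dominated predecessor $v'$, and it is genuinely the crux: a tba-simulation only constrains the \emph{successors} of related nodes and says nothing about their predecessors, so the reverse inclusion cannot follow from Definition~\ref{def:tba} alone. Here I would invoke the concrete shape of the tasks' moves in a scheduling game. Since $w$ arises from $v$ by a job-release move and $w$ $\wbs$-dominates the node $a$, I would reconstruct $v'$ componentwise, placing each task in the pre-release configuration that, on the one hand, releases precisely onto the value that $a$ prescribes, and, on the other hand, keeps $v'$ $\iesim$-below $v$ on that component; the eligibility of the released tasks in $v'$ and the three conditions of Definition~\ref{def:idle} would then be verified task by task. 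The delicate point — and the step I expect to require the most care — is to show that these two requirements can be satisfied simultaneously, i.e. that one can match $a$ exactly while remaining dominated by $v$; this is where the specific dynamics of job releases, rather than any property peculiar to $\iesim$, do the essential work.
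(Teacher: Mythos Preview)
Your overall plan coincides with the paper's: both reduce the claim to the equality $\PreE{\uc{\ATC}} = \uc{\PreE{\ATC}}$, and you correctly single out $\PreE{\uc{\ATC}} \subseteq \uc{\PreE{\ATC}}$ as the only non-trivial direction. The paper's argument for this direction is the one-line inference ``$v'\in\Succ{v}\cap\uc{\ATC}$, hence $v'\in\uc{\Succ{v}\cap\ATC}$, hence $\Succ{v}\cap\ATC\neq\emptyset$''---which is precisely the step you are (rightly) suspicious of, since nothing forces the antichain witness dominating $v'$ to be itself a successor of $v$. So your diagnosis is sharper than the paper's own treatment, and your instinct that the tba-simulation axioms alone cannot close the gap is correct.

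The problem is that your proposed repair cannot work either, because the inclusion you aim to establish is already false for $\iesim$ on scheduling games. Take a single task $\tau_1$ with $C_1=2$, $D_1=T_1=3$, on $m=1$ CPU, and let $\ATC=\{a\}$ with $a=\big((\rct{=}1,\nat{=}3),\PS\big)$. The node $v=\big((0,0),\PR\big)$ releases $\tau_1$ to reach $w=\big((2,3),\PS\big)$, and one checks $w\iesim a$, so $v\in\PreE{\uc{\ATC}}$. But the \emph{only} predecessor of $a$ is $u=\big((1,3),\PR\big)$, since a fresh release would force $\rct_a(\tau_1)=C_1=2\neq 1$; and $v\not\iesim u$ because $\rct_v(\tau_1)=0<1=\rct_u(\tau_1)$. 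Hence $v\notin\uc{\PreE{\ATC}}$. The ``delicate point'' you flag---matching $a$ exactly while remaining $\iesim$-below $v$---is therefore not merely delicate but impossible whenever a task releases along $v\to w$ while $0<\rct_a(\tau_i)<C_i$. The lemma as stated (for arbitrary antichains $\ATC\subseteq V_\PS$) does not hold; a correct argument would require an extra hypothesis on $\ATC$, which neither you nor the paper provide.
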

\begin{proof}
   To show this lemma, one must show that
  $\minac{\PreE{\uc{\ATC}}}=\minac{\PreE{\ATC}}$, since by definition of
  $\PreEATCname$,
  $\PreEATC{\ATC}=\minac{\PreE{\uc{\ATC}}}$. We thus prove that:
  \begin{align}
    \PreE{\uc{\ATC}} &= \uc{\PreE{\ATC}}\label{eq:1}
  \end{align}

  We show the inclusion in both directions. 
  
  First,
  $\PreE{\uc{ \ATC}} \subseteq \uc{\PreE{\ATC}}$.  For all
  $v \in \PreE{\uc{\ATC}}$, there exists $v'$ s.t.
  $v' \in \Succ{v} \cap \uc{\ATC}$, by definition of $\PreEname$. Then
  $v' \in \uc{\Succ{v} \cap \ATC}$. Hence
  $\Succ{v} \cap \ATC \neq \emptyset$. Therefore $v \in \PreE{\ATC}$
  (see equation \eqref{eq:pree1}), in consequence $v \in \uc{\PreE{\ATC}}$.
  Next, let us show that
  $\uc{\PreE{\ATC}} \subseteq \PreE{\uc{ \ATC}}$. For each
  $v \in \uparrow \PreE{\ATC} $, there exists $v'$ s.t.
  $v' \in \PreE{\ATC}$ and $v \wbs v'$. There hence exists
  $\overline{v'}$, a successor of $v'$ and $v' \in \ATC$ (since
  $\Succ{v'} \cap \ATC \neq \emptyset$). By Definition~\ref{def:tba},
  there exists $\overline{v} \in \Succ{v}$ such that
  $\overline{v} \wbs \overline{v'}$. Since $\overline{v'} \in \ATC$
  then $\overline{v} \in \uparrow \ATC$. Hence,
  $v \in \PreE{\uc{ \ATC}}$.
\end{proof}

It remains to observe that, given any set $V'\subseteq V_{\PS}$,
computing $\minac{\PreE{V'}}$ can be done directly (i.e., without the
need of building and exploring a large portion of the game graph), by
`inverting' the definition of the successor relation, and making sure
to keep minimal elements only. More precisely, let
$v=(S,\PS)\in V_\PS$ be a \PS node. Then,
$v'=(S',\PR)\in \PreE{\{v\}}$ iff the following conditions hold for
all $\tau_i \in \tau$:
\begin{inparaenum}[(i)]
\item if $\rct_S(\tau_i) < C_i$, then
  $\rct_S(\tau_i) = \rct_{S'}(\tau_i)$ and
  $\nat_S(\tau_i) = \nat_{S'}(\tau_i)$;
and \item if $\rct_S(\tau_i) = C_i$ than either $\rct_{S'}(\tau_i) = 0$
  and $\nat_{S'}(\tau_i) = 0$ or $\rct_{S'}(\tau_i) = C_i$ and
  $\nat_{S'}(\tau_i) = \nat_{S}(\tau_i)$.
\end{inparaenum}
In the end, $\minac{\PreE{V'}}$ = $\minac{\bigcup_{v \in V'} \PreE{\{v\}}}$.

\paragraph{Computation of $\PreAATCname$} Let us now show how to
compute, in an efficient fashion, the operator
$\PreAATCname$. Remember that we are given an antichain
$\ATC\subseteq V_\PR$ which is a compact representation of some
upward-closed set $\uc{\ATC}$, and that we want to compute
$\minac{\PreA{\uc{\ATC}}}$, without needing to compute the full
$\uc{\ATC}$ set. Unfortunately, and contrary to $\PreEATCname$, it is
not sufficient to consider the predecessors of the elements in $\ATC$
to deduce $\PreAATC{\ATC}$, as shown by the example in
\figurename~\ref{fig:preA-difficult}. In this example, the antichain
$\ATC$ is $\{\PR_1,\PR_2\}$. The gray cones depict the upward-closures
of those states. Now, if we consider the predecessors of
$\PR_1\in\ATC$, we obtain $\PS_1$, however this state has a successor
$\PR_3$ which is not in $\uc{\ATC}$, so
$\PS_1\not\in\PreA{\uc{\ATC}}$, hence $\PS_1\not\in\PreAATC{\ATC}$.
Yet, on this example, there is a state $\PS_1'\iesim\PS_1$ which has
all its successors in $\uc{\ATC}$, but which is neither in
$\PreE{\ATC}$ nor in $\PreA{\ATC}$.

\begin{figure}
  \centering

  \begin{tikzpicture}[scale=.75]
    \draw[gray!60, fill=gray!60, opacity=.5] (5,3) -- (7,7) -- (3.5,7) -- cycle ;
    \draw[gray!60, fill=gray!60, opacity=.5] (3,5) -- (5,7) -- (1.5,7) -- cycle ;
    \node[draw,rectangle,fill=white] (T3) at (4,2) {$\PR_3$} ;
    \node[draw,rectangle,fill=white] (T1) at (5,3) {$\PR_1$} ;
    \node[draw,rectangle,fill=white] (T2) at (3,5) {$\PR_2$} ;
    \node[draw,rectangle,fill=white] (T1p) at (5,5) {$\PR_1'$} ;
    \node[draw,rectangle,fill=white] (T2p) at (3,6) {$\PR_2'$} ;
    \node[draw,circle] (S1) at (2,2) {$\PS_1$} ;
    \node[draw,circle] (S1p) at (1,4) {$\PS_1'$} ;
    \path[->] (S1) edge (T3)
                   edge[bend left] (T1)
              (S1p) edge[bend right=15] (T1p)
                    edge[bend left] (T2p) ;
    \path[dotted] (S1) edge[bend left] node[rotate=-70] {$\iesim$}
    (S1p) ;
    \path[dotted] (T2p) edge[bend left] node[rotate=-70] {$\iesim$} (T3) ;
  \end{tikzpicture}
  \caption{$\PreAATC{\{\PR_1,\PR_2\}}$ cannot be computed by
    considering only the predecessors of $\PR_1$ and $\PR_2$.}
  \label{fig:preA-difficult}
\end{figure}

Unfortunately, we haven't managed to find a \emph{direct} way to
compute $\PreAATC{\ATC}$ from $\ATC$; i.e., without enumerating some
elements of $\uc{\ATC}$ that are not in $\ATC$. We propose
a method that, while not working only on elements of $\ATC$,
avoids the full enumeration of $\uc{\ATC}$, and performs well in
practice. It is described in \algorithmcfname~\ref{alg:upre}. Instead
of computing $\PreAATC{\ATC}$, it returns the subset
$\PreAATC{\ATC}\setminus \uc{\ATC}$. Observe that this subset is
sufficient for \algorithmcfname~\ref{alg:bw-tba}, because, at each
step, one computes, from $\ALS_i$, the set $\ALS_{i+1}$ as:
\begin{align}
 \ALS_{i+1} = \lfloor &\PreEATC{\ALS_i} \cup\PreAATC{\ALS_i} \nonumber \\
 & \cup\ALS_i \rfloor. \label{eq:2}
\end{align}
 
 \begin{algorithm}
  \DontPrintSemicolon
  \SetKwFunction{UP}{Compute-$\PreAname$}
  \UP\Begin{
    
    $\mathcal{B}$ $\leftarrow$ $\emptyset$ \;
    $\mathrm{ToSearch}$ $\leftarrow$ $\emptyset$ \;
    
    \ForEach{$v \in \PreE{\ATC}\setminus\uc{\ATC}$} {\label{alg:init}
      \lIf{$\Succ{v} \subseteq \uc{\ATC} $}{
        $\mathcal{B}$ $\leftarrow$ $\minac{\mathcal{B} \cup \{v\} }$ 
      }
      \lElse{
        $\mathrm{ToSearch}$ $\leftarrow$ $\mathrm{ToSearch}\cup\uc{v}\setminus\uc{\ATC}$ 
      }
    }

    \While{$\mathrm{ToSearch}\neq\emptyset$}{\label{alg:compute}
      Pick and remove $v$ from $\mathrm{ToSearch}$ \;
      \lIf{$\Succ{v} \subseteq \uc{\ATC}$}{
        \label{alg:to-ac}
        $\mathcal{B}$ $\leftarrow$ $\minac{\mathcal{B} \cup \{v\} }$
      }
      \lElse{\label{alg:optim-dc}
        $\mathrm{ToSearch}$ $\leftarrow$ $\mathrm{ToSearch}\setminus
        \{v'\mid v\iesim v'\}$ 
      }
    }

    \Return{$\mathcal{B}$} \;
  }
  \caption{Computing
    $\PreAATC{\ATC}\setminus \uc{\ATC}$.}

  \label{alg:upre}
 
\end{algorithm}

Hence, it is easy to check that replacing $\PreAATC{\ALS_i}$ in
\eqref{eq:2} by:
\[
  \PreAATC{\ALS_i}\setminus \uc{\ALS_i}
\]
does not change the value of the expression. Yet, computing
$\PreAATC{\ATC}\setminus \uc{\ATC}$ instead of $\PreAATC{\ATC}$ allows
us to avoid considering elements which are already computed (in
$\ATC$), which makes \algorithmcfname~\ref{alg:upre} efficient in
practice.  This algorithm works as follows. First, we build a set
$\mathrm{ToSearch}$ of candidates nodes for the set $\PreAATC{\ATC}$,
then we explore it. $\mathrm{ToSearch}$ is initialised by the loop
starting at line~\ref{alg:init}. In this loop, we consider all the
predecessors $v$ of $\ATC$ (we will explain hereinafter how these
predecessors can be computed efficiently). We keep only the
predecessors $v$ that are not covered by $\ATC$, and check whether all
successors of $v$ are in $\uc{\ATC}$. If yes, $v\in\PreA{\uc{\ATC}}$,
and we add $v$ to the antichain $\mathcal{B}$ (that will eventually be
returned). Remark that the test $\Succ{v} \subseteq \uc{\ATC} $ can be
performed without computing explicitly $\uc{\ATC} $, by checking that,
for all $v'\in\Succ{v}$, there is $v''\in \ATC$ s.t. $v'\iesim v''$.
If there are some successors of $v$ that are not in $\uc{\ATC}$, we
must consider all nodes larger than $v$ as candidates for
$\PreAATC{\ATC}$, but not those that are already in $\uc{\ATC}$,
hence, we add to $\mathrm{ToSearch}$ the set
$\uc{v}\setminus\uc{\ATC}$. The fact that we add
$\uc{v}\setminus\uc{\ATC}$ instead of $\uc{v}$ is important to keep
$\mathrm{ToSearch}$ small, which is a key optimisation of the
algorithm. Then, once $\mathrm{ToSearch}$ has been built, we examine
all the elements $v$ it contains, and check whether they belong to
$\PreA{\uc{\ATC}}$, in the loop starting in line~\ref{alg:compute}. If
$v\in\PreA{\uc{\ATC}}$ (line~\ref{alg:to-ac}), $v$ is added to
$\mathcal{B}$. Otherwise, another optimisation of the algorithm
occurs: $v$ is discarded from $\mathrm{ToSearch}$, as well as
\emph{all the nodes that are smaller than} $v$ (line 10). This is
correct by properties of the tba-simulation (see
\cite{DBLP:conf/rp/GeeraertsGS14}), and allows the algorithm to
eliminate candidates from $\mathrm{ToSearch}$ much quicker.

To conclude the description of \algorithmcfname~\ref{alg:upre}, let us
explain how to compute $\PreE{\ATC}$ efficiently, when
$\ATC\subseteq V_\PR$, as needed in line~\ref{alg:init}. Let $(S,\PR)$
be a $\PR$-node. Then, $(S',\PS)\in\PreE{S,\PR}$ iff there is a set
$\tau'\subseteq\tau$ of scheduled tasks s.t. $|\tau'|\leq m$ (at most
$m$ tasks have been scheduled) and the following conditions hold for
all $\tau_i \in \tau$:
\begin{inparaenum}[(i)] 
\item $\nat_{S'}(\tau_i) = \min(\nat_{S}(\tau_i) + 1,T_i)$
\item $\rct_S(\tau_i) = C_i$ implies $\rct_{S'}(\tau_i) = \rct_S(\tau_i)  $
\item
  $\rct_S(\tau_i) < C_i$ implies $\rct_{S'}(\tau_i) = \rct_S(\tau_i)$
  if $\tau_i\not\in \tau'$; and $\rct_{S'}(\tau_i) = \rct_S(\tau_i) +
  1$ otherwise.
\end{inparaenum}

\subsection{Further optimisation of the algorithm}\label{sec:furth-optim-algor}
We close our discussion of \Bw by describing the last optimisation that
further enhances the performance of the algorithm. In
\algorithmcfname~\ref{alg:bw-tba}, the $\UPreATCname$ operator is
applied, at each step of the main loop, to the whole $\ALS_i$ set in
order to compute $\ALS_{i+1}$. While this makes the discussion of the
algorithm easier, it incurs unnecessary computations: instead of
computing the uncontrollable predecessors of \emph{all} states in
$\ALS_i$, one should instead compute the uncontrollable predecessors
of the node that have been inserted in $\ALS_i$ at the previous
iteration of the loop \emph{only}. As an example, on the game in
\figurename~\ref{fig:gameRS},  
$\ALS_0=\minac{\{\PR_4,\ldots,\PR_{10}\}}$;
$\ALS_1=\ALS_0\cup \{\PS_2\}$ since $\PS_2$ is incomparable to
$\PR_4,\ldots,\PR_{10}$. Then, to discover that $\PR_2$ is a losing
node too, we should only apply $\UPreATCname$ to $\PS_2$, and not to
$\PR_4,\ldots,\PR_{10}$ again.

To formalise this idea, we introduce the notion of
\emph{frontier}. Intuitively, at each step $i$ of the algorithm,
$\Frontier_i$ is an antichain that represents the {\it newly
  discovered} losing states, i.e. the states that have been declared
losing at step $i$ but were not known to be losing at step $i-1$. To
define formally the sequence $\left(\Frontier_i\right)_{i\geq 0}$, we
introduce the operator $\PreAS{\mathcal{A}}{\mathcal{C}}$ defined as
$\PreAS{\mathcal{A}}{\mathcal{C}}=\break\uc{\PreE{\mathcal{C}}}\cap
\PreAATC{\mathcal{A}}$.
That is, nodes in $\PreAS{\mathcal{A}}{\mathcal{C}}$ have all their
successors in $\mathcal{A}$ and are larger than some predecessor of
$\mathcal{C}$. Hence, $\PreAS{\mathcal{A}}{\mathcal{C}}$ can be
computed by adapting \algorithmcfname~\ref{alg:upre}, substituting
$\PreE{\mathcal{C}}$ for $\PreE{\ATC}$ in
line~\ref{alg:init}. Clearly, if we manage to keep the size of
$\mathcal{C}$ as small as possible, we will further decrease the size
of $\mathrm{ToSearch}$, and \algorithmcfname~\ref{alg:upre} will be
even more efficient.

Now let us define the sequence of $\Frontier_i$ sets. We let
$\Frontier_0 = \minac{\bad}=\bad_{\iesim}$. For all $i > 0$, we let:
\[
  \Frontier_{i+1}= \minac{\UPreS{\ALS_i}{\Frontier_i}},
\]
and:
\begin{eqnarray*}
  &\UPreS{\ALS_i}{\Frontier_i}\\
  &=\\
  &\PreE{\Frontier_i \cap V_\PS} \cup
  \PreAS{\ALS_i}{\Frontier_i \cap V_\PR}.
\end{eqnarray*}

Observe that in this new definition of the uncontrollable predecessor
operator, only predecessors of nodes in the frontier are considered
for computation, which is the key of the optimisation. The next lemma
and proposition justify the correctness of this new construction. In
particular, Proposition~\ref{prop:frontier} shows that the sequence
$\left(\ALS_i\right)_{i\geq 0}$ (which is computed by \Bw) can be
computed applying the operator we have just described, on elements
from the frontier only.

\begin{figure}
\center
 \includegraphics[width=0.8\textwidth]{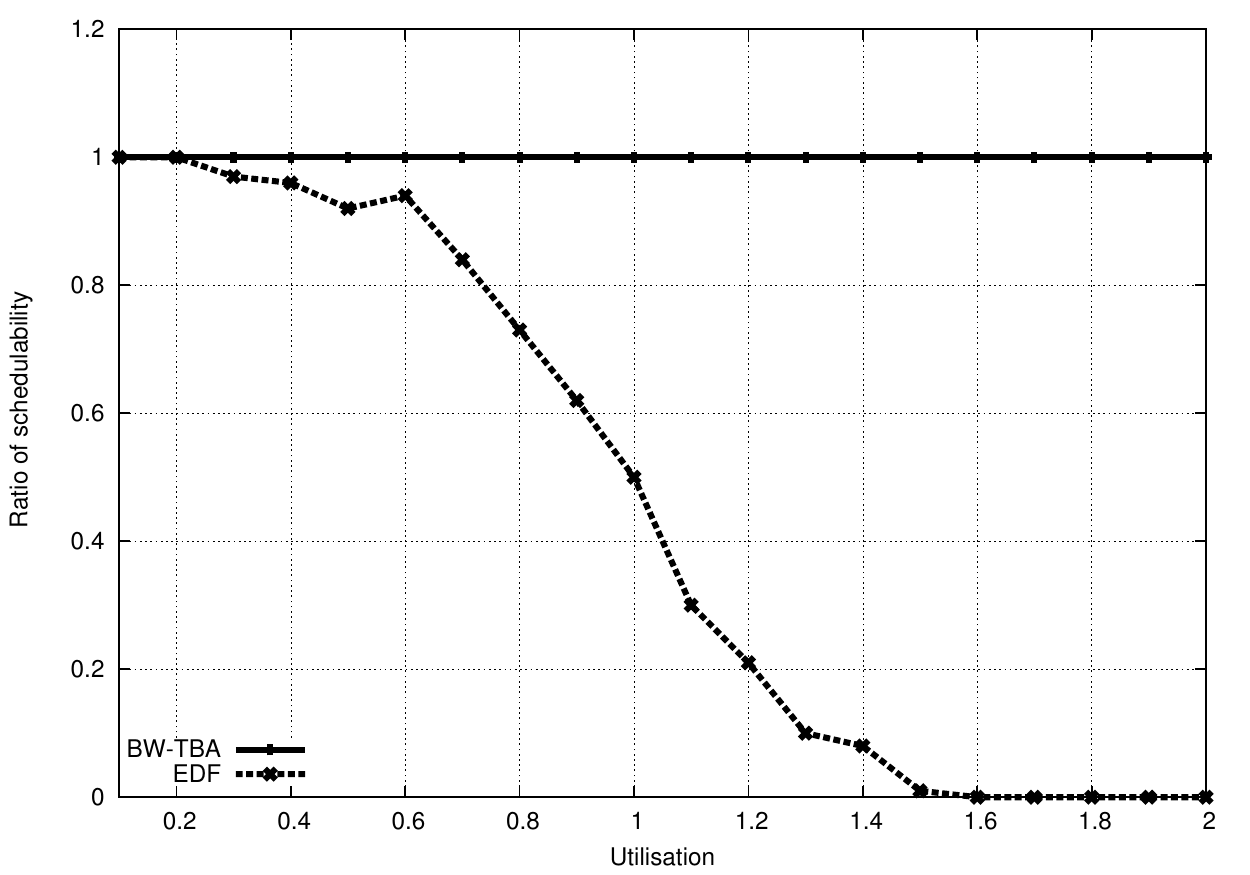}
 \caption{Ratio of schedulable systems by $\EDF$ and by $\Bw$.}
 \label{fig:edf_bw}
\end{figure}

To establish Proposition~\ref{prop:frontier}, we first need an
ancillary lemma, showing that all the elements that occur in the
frontier at some step $i$ are also in $\ALS_i$:

\begin{lemma}
\label{lem:frontier}
For all $i \in \mathbb{N}, \Frontier_i \subseteq \ALS_i$.
\end{lemma}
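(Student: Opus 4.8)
The plan is to proceed by induction on $i$, establishing $\Frontier_i \subseteq \ALS_i$ for all $i$. The base case is immediate: by definition $\Frontier_0 = \minac{\bad} = \bad_{\iesim} = \ALS_0$, so the inclusion (in fact equality) holds. For the inductive step I assume $\Frontier_i \subseteq \ALS_i$ and aim to show $\Frontier_{i+1} \subseteq \ALS_{i+1}$. Since Proposition~\ref{pro:atr-als} gives $\Attr_j = \uc{\ALS_j}$, and since each finite set admits a \emph{unique} minimal antichain, the antichain $\ALS_j$ must coincide with $\minac{\Attr_j}$; in particular $\ALS_{i+1} = \minac{\Attr_{i+1}}$. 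Hence it suffices to prove two facts about every $v \in \Frontier_{i+1}$: that $v \in \Attr_{i+1}$, and that $v$ is $\iesim$-minimal in $\Attr_{i+1}$.

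For membership in the attractor I would prove the set inclusion $\UPreS{\ALS_i}{\Frontier_i} \subseteq \UPre{\uc{\ALS_i}}$. Recall $\UPreS{\ALS_i}{\Frontier_i} = \PreE{\Frontier_i \cap V_\PS} \cup \PreAS{\ALS_i}{\Frontier_i \cap V_\PR}$. The first term is handled by monotonicity of $\PreEname$ together with the induction hypothesis: $\Frontier_i \cap V_\PS \subseteq \ALS_i \cap V_\PS \subseteq \uc{\ALS_i}$, so $\PreE{\Frontier_i \cap V_\PS} \subseteq \PreE{\uc{\ALS_i} \cap V_\PS}$. The second term is, by its definition as an intersection, contained in $\PreAATC{\ALS_i} = \minac{\PreA{\uc{\ALS_i}}} \subseteq \PreA{\uc{\ALS_i} \cap V_\PR}$. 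Combining the two yields $\UPreS{\ALS_i}{\Frontier_i} \subseteq \PreE{\uc{\ALS_i} \cap V_\PS} \cup \PreA{\uc{\ALS_i} \cap V_\PR} = \UPre{\uc{\ALS_i}}$. Using $\uc{\ALS_i} = \Attr_i$ again, we obtain $\UPreS{\ALS_i}{\Frontier_i} \subseteq \UPre{\Attr_i} \subseteq \Attr_i \cup \UPre{\Attr_i} = \Attr_{i+1}$. Since $\Frontier_{i+1} = \minac{\UPreS{\ALS_i}{\Frontier_i}}$ is a subset of $\UPreS{\ALS_i}{\Frontier_i}$, every $v \in \Frontier_{i+1}$ lies in $\Attr_{i+1}$.

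For minimality I would take $v \in \Frontier_{i+1}$ and suppose, for contradiction, that some $w \in \Attr_{i+1}$ satisfies $v \iesimgt w$. Splitting on whether $v$ is a $\PS$- or a $\PR$-node, I would locate $w$ inside $\Attr_{i+1} = \Attr_i \cup \UPre{\Attr_i}$ and exploit the matching clauses of the tba-simulation (Definition~\ref{def:tba}) to relate the successors of $v$ and $w$. When $v \in V_\PS$, the construction of $\PreAATC{\ALS_i} = \minac{\PreA{\uc{\ALS_i}}}$ already makes $v$ a minimal element of $\PreA{\uc{\ALS_i}}$; so if a strictly smaller $w$ (or the element $a\in\ALS_i$ it sits above) could be shown, via the simulation, to also belong to $\PreA{\uc{\ALS_i}}$, this would contradict that minimality, closing the $\PS$-case.

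The delicate step, and the one I expect to be the main obstacle, is precisely this minimality transfer, especially in the $\PR$-node case: I must rule out that a frontier node $v$ is strictly above an already-discovered minimal losing state $a \in \ALS_i$. This forces me to argue that minimality measured \emph{locally}, inside $\UPreS{\ALS_i}{\Frontier_i}$, coincides with minimality measured \emph{globally}, inside the full attractor $\Attr_{i+1}$, and it is here that the interplay between the frontier construction (which only follows predecessors of $\Frontier_i$) and the simulation conditions has to be used carefully. The remaining work — monotonicity of the predecessor operators and the repeated use of the identity $\uc{\ALS_j} = \Attr_j$ — is routine bookkeeping.
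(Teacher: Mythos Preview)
Your membership argument (showing each element of $\UPreS{\ALS_i}{\Frontier_i}$ lies in $\Attr_{i+1}$) is correct and is essentially what the paper does: it proves the two inclusions $\PreE{\Frontier_i\cap V_\PS}\subseteq\PreE{\ALS_i\cap V_\PS}$ and $\PreAS{\ALS_i}{\Frontier_i\cap V_\PR}\subseteq\PreAATC{\ALS_i\cap V_\PR}$ using the induction hypothesis and the definition of $\PreAS{\cdot}{\cdot}$ as an intersection, and then simply asserts that this ``is sufficient'' for $\Frontier_{i+1}\subseteq\ALS_{i+1}$. So on the first half of your plan you and the paper agree.

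The difference is that you notice what the paper glosses over: the operator $\minac{\cdot}$ is not monotone, so from $A\subseteq B$ one cannot conclude $\minac{A}\subseteq\minac{B}$. You are right that a separate minimality argument is needed, and this is exactly the gap in your proposal---you set up the case split and identify the delicate point (a frontier element $v$ might sit strictly above some $a\in\ALS_i$ that does \emph{not} itself belong to $\UPreS{\ALS_i}{\Frontier_i}$, so local minimality in the frontier need not imply global minimality in $\Attr_{i+1}$), but you do not close it. Your sketch of the $V_\PS$ case via minimality of $\PreAATC{\ALS_i}$ is plausible, yet even there you would still have to exclude the possibility $a\in\ALS_i\cap V_\PS$ with $v\iesimgt a$, and nothing in the tba-simulation axioms rules that out directly. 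In short: the paper's proof stops at the pre-minimisation inclusion and treats it as enough; you correctly flag that more is needed, but you do not supply it. What both arguments actually establish is the weaker fact $\uc{\Frontier_{i+1}}\subseteq\uc{\ALS_{i+1}}$; if you cannot complete the minimality step, that weaker statement may be the one to carry forward.
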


\begin{proof}
  We prove the lemma by induction on $i$.

  {\bf Base case:} When $i = 0$, $\ALS_0 = \bad_\iesim$ and
  $\Frontier_0$ = $\bad_\iesim$. Hence the lemma is trivially correct.

  {\bf Inductive case: } Given $\Frontier_i \subseteq \ALS_i$, we will
  prove that: 
  \[
    \Frontier_{i+1} \subseteq \ALS_{i+1},
  \] i.e.  $\forall v \in \Frontier_{i+1}: v \in \ALS_{i+1}$.

  By the definition of the sequence $\Frontier$:
  \begin{align*}
    \Frontier_{i+1} &=\lfloor \UPreS{\ALS_i}{\Frontier_i} \rfloor \\
    &=\lfloor \PreE{\Frontier_i \cap V_\PS} \cup \PreAS{\ALS_i}{\Frontier_i
    \cap V_\PR} \rfloor.
  \end{align*}
 
  By the definition of the sequence $\ALS$, 
  $\ALS_{i+1} = \lfloor \ALS_i \cup \PreE{\ALS_i \cap V_\PS} \cup
  \PreAATC{\ALS_i \cap V_\PR} \rfloor $.
  Hence, proving that the two following items are correct is sufficient.
  \begin{enumerate}[(i)]
  \item  $\PreE{\Frontier_i \cap V_\PS} \subseteq \PreE{\ALS_i \cap V_\PS} $ 
  \item  $\PreAS{\ALS_i}{\Frontier_i \cap V_\PR} \subseteq \PreAATC{\ALS_i \cap V_\PR} $.
  \end{enumerate}
  We then prove these two items as follows:
  \begin{itemize}[(i)]
  \item Given $v \in \PreE{\Frontier_i \cap V_\PS}$, then 
    $\exists v' \in \Succ{v}$ where $v' \in \Frontier_i \cap V_\PS$ (see equation
    \eqref{eq:pree2}). Since $\Frontier_i \subseteq \ALS_i$ then
    $v' \in \ALS_i$ as well. Hence, $v \in \PreE{\ALS_i \cap V_\PS}$.
  \item Given $v \in \PreAS{\ALS_i}{\Frontier_i \cap V_\PR} $ then
    $v \in \PreE{\Frontier_i \cap V_\PR} \cap \PreAATC{\ALS_i \cap V_\PR}$.
    Therefore,
    $v \in \PreAATC{\ALS_i \cap V_\PR} $ as well.
  \end{itemize}
  Hence the lemma is successfully proved. 
\end{proof}

We can now prove Proposition~\ref{prop:frontier}:

\begin{proposition}
\label{prop:frontier}
For all $i \in \mathbb{N}$:
\[
\ALS_{i+1} = \minac{\ALS_{i} \cup \Frontier_{i+1}}.
\]
\end{proposition}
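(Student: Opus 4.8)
The plan is to argue by induction on $i$, working throughout at the level of upward-closed sets rather than of antichains. Since $\minac X = \minac Y$ holds iff $\uc{X} = \uc{Y}$, and since $\Attr_j = \uc{\ALS_j}$ for every $j$ by Proposition~\ref{pro:atr-als}, I would first recast the goal $\ALS_{i+1} = \minac{\ALS_i\cup\Frontier_{i+1}}$ as the equivalent set identity $\Attr_{i+1} = \Attr_i\cup\uc{\Frontier_{i+1}}$. Read at the previous index, this same equivalence supplies the induction hypothesis $\Attr_i = \Attr_{i-1}\cup\uc{\Frontier_i}$ (adopting the convention $\Attr_{-1}=\emptyset$, under which the base case $i=0$ becomes a plain instance because $\Frontier_0 = \bad_{\iesim} = \ALS_0$ and $\uc{\Frontier_0}=\bad=\Attr_0$). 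Recalling from Proposition~\ref{prop:attr} that $\Attr_{i+1} = \Attr_i\cup\UPre{\Attr_i}$ and that it is upward closed, I would then establish the two inclusions $\uc{\Frontier_{i+1}}\subseteq\Attr_{i+1}$ and $\UPre{\Attr_i}\setminus\Attr_i\subseteq\uc{\Frontier_{i+1}}$ separately.

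The first inclusion (soundness) is the routine half. By Lemma~\ref{lem:frontier} we have $\Frontier_i\subseteq\ALS_i$, hence $\uc{\Frontier_i}\subseteq\uc{\ALS_i}=\Attr_i$. I would use this to show that each piece of $\UPreS{\ALS_i}{\Frontier_i}$ lands in $\UPre{\Attr_i}$: the part $\PreE{\Frontier_i\cap V_\PS}$ sits inside $\PreE{\Attr_i\cap V_\PS}$, while $\PreAS{\ALS_i}{\Frontier_i\cap V_\PR}$ consists of $\PS$-nodes all of whose successors lie in $\Attr_i$, so its closure sits inside $\PreA{\Attr_i\cap V_\PR}$. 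As $\Frontier_{i+1}=\minac{\UPreS{\ALS_i}{\Frontier_i}}$ and $\Attr_{i+1}$ is upward closed, this yields $\uc{\Frontier_{i+1}}\subseteq\UPre{\Attr_i}\subseteq\Attr_{i+1}$.

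The second inclusion (completeness) is the heart of the proof. Given a freshly losing node $v\in\UPre{\Attr_i}\setminus\Attr_i$, I would first observe that, since $\Attr_i = \Attr_{i-1}\cup\UPre{\Attr_{i-1}}$ and $v\notin\Attr_i$, the node $v$ is not an uncontrollable predecessor of $\Attr_{i-1}$; hence it must have a witnessing successor $\hat v\in\Attr_i\setminus\Attr_{i-1}$, which by the induction hypothesis lies in $\uc{\Frontier_i}$, and (as $\iesim$ preserves the player) in $\uc{\Frontier_i\cap V_\PS}$ when $v\in V_\PR$ and in $\uc{\Frontier_i\cap V_\PR}$ when $v\in V_\PS$. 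For a $\PR$-node $v$ this already concludes: $v\in\PreE{\uc{\Frontier_i\cap V_\PS}}=\uc{\PreE{\Frontier_i\cap V_\PS}}$ by equation~\eqref{eq:1} of Lemma~\ref{lem:prea}, and the right-hand side is a component of $\uc{\Frontier_{i+1}}$. For a $\PS$-node $v$, membership $v\in\UPre{\Attr_i}$ forces $\Succ{v}\subseteq\Attr_i=\uc{\ALS_i\cap V_\PR}$, i.e.\ $v\in\PreA{\uc{\ALS_i\cap V_\PR}}$, while the witness gives $v\in\PreE{\uc{\Frontier_i\cap V_\PR}}=\uc{\PreE{\Frontier_i\cap V_\PR}}$; so $v$ lies in the intersection $\uc{\PreE{\Frontier_i\cap V_\PR}}\cap\PreA{\uc{\ALS_i\cap V_\PR}}$.

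The step I expect to be the main obstacle is identifying this intersection with $\uc{\PreAS{\ALS_i}{\Frontier_i\cap V_\PR}}$, since a priori the upward closure of an intersection can be strictly smaller than the intersection of the closures. Here, however, both factors are upward closed: $\uc{\PreE{\Frontier_i\cap V_\PR}}$ by construction, and $\PreA{\uc{\ALS_i\cap V_\PR}}$ by the $\PS$-node case of the argument in Lemma~\ref{lem:upre} (the simulation sends the successors of a larger node above those of a smaller one, so $\PreA$ of an upward-closed set is upward closed). Their intersection is therefore upward closed and is represented without loss by its minimal antichain, which is precisely what $\PreAS{\ALS_i}{\Frontier_i\cap V_\PR}$ computes; taking the upward closure thus recovers the full intersection and captures $v$. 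Combining the $\PR$- and $\PS$-cases gives $\UPre{\Attr_i}\setminus\Attr_i\subseteq\uc{\Frontier_{i+1}}$, which together with the soundness inclusion closes the induction.
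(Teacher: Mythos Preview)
Your overall plan---lift to upward closures, induct on $i$, and split the completeness direction by player---mirrors the paper's structure, and your soundness half and your $V_\PR$-case are fine. The gap is in the $V_\PS$-case, and it is precisely at the point you flagged as the main obstacle.

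First, the equality $\PreE{\uc{\Frontier_i\cap V_\PR}}=\uc{\PreE{\Frontier_i\cap V_\PR}}$ that you invoke is \emph{not} an instance of equation~\eqref{eq:1}: Lemma~\ref{lem:prea} is stated for antichains contained in $V_\PS$ (predecessors are then $V_\PR$-nodes, where clause~(ii) of Definition~\ref{def:tba} applies). For $V_\PR$-targets, the relevant simulation clause is~(i), and it does \emph{not} yield this identity in general; the two sets can differ.

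Second, and more seriously, your description of $\PreAS{\ALS_i}{\Frontier_i\cap V_\PR}$ is not what the paper defines. By definition,
\[
\PreAS{\mathcal{A}}{\mathcal{C}}=\uc{\PreE{\mathcal{C}}}\cap\PreAATC{\mathcal{A}},
\]
i.e.\ the intersection of the first upward-closed set with the \emph{antichain} $\PreAATC{\mathcal{A}}=\minac{\PreA{\uc{\mathcal{A}}}}$, not with the upward-closed set $\PreA{\uc{\mathcal{A}}}$. Consequently $\uc{\PreAS{\ALS_i}{\Frontier_i\cap V_\PR}}$ is the upward closure of those minimal elements of $\PreA{\Attr_i}$ that happen to lie in $\uc{\PreE{\Frontier_i\cap V_\PR}}$; this is in general \emph{strictly smaller} than the intersection $\uc{\PreE{\Frontier_i\cap V_\PR}}\cap\PreA{\Attr_i}$ of the two upward-closed sets. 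So even granting that your $v$ lies in that intersection, you have not shown $v\in\uc{\PreAS{\ALS_i}{\Frontier_i\cap V_\PR}}$, which is what $\uc{\Frontier_{i+1}}$ actually contains on the $V_\PS$ side.

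The paper's proof avoids this trap by arguing at the antichain level rather than at the level of closures: it takes $v\in\PreAATC{\ALS_{i+1}}\setminus\ALS_{i+1}$, so $v$ is \emph{already} one of the minimal elements appearing in the second factor of $\PreAS{\ALS_{i+1}}{\Frontier_{i+1}}$; membership in $\PreAS{\ldots}{\ldots}$ then only requires placing $v$ in $\uc{\PreE{\Frontier_{i+1}}}$. This is the structural difference you are missing: when you pass to closures first, you lose control over which minimal element of $\PreA{\Attr_i}$ sits below your $v$, and nothing forces that particular minimal element to be a $\iesim$-successor of a $\PreE$-predecessor of $\Frontier_i$.
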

\begin{proof} [Proof of Proposition~\ref{prop:frontier}]
  The proof is proved by induction on $i$.

  {\bf Base case: } When $i=0$, $\ALS_0 = \Frontier_0 = \bad_\iesim$. By definition of the
  sequence $\ALS$:
  \begin{align*}
    \ALS_1 &= \lfloor \ALS_0 \cup \UPreATC{\ALS_0} \rfloor\\
    \UPreATC{\ALS_0} &=  \PreE{\ALS_0 \cap V_\PS} \cup \PreAATC{\ALS_0 \cap V_\PR}.
  \end{align*}
  On the other hand:
  \begin{align*}
    \Frontier_1 &= \lfloor \UPreS{\Frontier_0}{\Frontier_0} \rfloor\\
    &= \lfloor \PreE{\Frontier_0 \cap V_\PS} \cup \PreAS{\ALS_0} {\Frontier_0
    \cap V_\PR} \rfloor.
  \end{align*}  
  Since $\ALS_0 = \Frontier_0$, then:
  \[
    \PreAATC{\ALS_0} = \PreAS{\ALS_0} {\Frontier_0}.
  \]
  Hence,
  $\UPreATC{\ALS_0} = \Frontier_1$.  Finally,
  $\ALS_{1} = \lfloor \ALS_{0} \cap \Frontier_{1} \rfloor$.

  {\bf Inductive case: } For all $i>0$, given
  $\ALS_{i+1} = \lfloor \ALS_{i} \cup \Frontier_{i+1} \rfloor$, we need
  to prove 
  $\ALS_{i+2} = \lfloor \ALS_{i+1} \cup \Frontier_{i+2} \rfloor$. In
  other words we need to prove that:
  \begin{eqnarray*}
    &\lfloor \ALS_{i+1} \cup \PreE{\ALS_{i+1} \cap V_\PS} \cup
      \PreAATC{\ALS_{i+1} \cap V_\PR} \rfloor\\
    &=\\
    &\left \lfloor
      \begin{array}{c}
        \ALS_{i+1} \cup
      \PreE{\Frontier_{i+1} \cap V_\PS}\\ \cup\\
      \PreAS{\ALS_{i+1}}{\Frontier_{i+1} \cap V_\PR}
      \end{array}
      \right\rfloor.    
  \end{eqnarray*}

  We divide into two cases as follows.
  \begin{enumerate}
  \item For all states controlled by player \PS,  the proposition
    becomes:
    \begin{eqnarray*}
      &\lfloor \ALS_{i+1} \cup \PreE{\ALS_{i+1}} \rfloor\\ &=\\ 
      &\lfloor \ALS_{i+1} \cup \PreE{\Frontier_{i+1}}
        \rfloor.
    \end{eqnarray*}
    We need to prove that $\ALS_{i+1} \cup \PreE{\ALS_{i+1}}$
    (abbreviated by $\mathcal{L}$) =
    $\ALS_{i+1} \cup \PreE{\Frontier_{i+1}}$ (abbreviated by
    $\mathcal{R}$).
    \begin{itemize}
    \item Assume that $\exists v \in \mathcal{L}$ such that
      $v \not\in \mathcal{R}$ ($v \in V_\PR$).  Hence:
      \begin{inparaenum}[(i)]
      \item \label{item:1} $v \not\in \ALS_{i+1}$; and
      \item \label{item:2} $v \in \PreE{\ALS_{i+1} }$.
      \end{inparaenum}
      From~\ref{item:1}, there does not exist $v' \in \ALS_{i}$ such
      that $v \in \PreE{\{v'\}}$. From~\ref{item:2}, there exists
      $v' \in \ALS_{i+1}$ such that $v \in \PreE{\{v'\}}$. Hence
      $v' \in \ALS_{i+1} \setminus \ALS_i$, i.e.
      $v' \in \Frontier_{i+1}$ (thanks to the hypothesis). In
      consequence, $v \in \PreE{\Frontier_{i+1}}$. In
      contradiction. Therefore $\mathcal{L} \subseteq \mathcal{R}$.

    \item Given $v \in \mathcal{R}$, if $v \in \ALS_{i+1}$ then
      $v \in \mathcal{L}$. Otherwise,
      $v \in \PreE{\Frontier_{i+1}}$. Hence there exists
      $v' \in \Frontier_{i+1}$ such that $v \in \PreE{\{v'\}}$. Since
      $\Frontier_{i+1} \subseteq \ALS_{i+1}$ (see
      Lemma~\ref{lem:frontier}), $v' \in \ALS_{i+1}$. Thus by equation
      $\eqref{eq:pree2}$, $v \in \PreE{\ALS_{i+1}}$, thus
      $v \in \mathcal{L}$. Hence,
      $\mathcal{L} \supseteq \mathcal{R}$.
    \end{itemize}

  \item For all states controlled by player \PR, the proposition
    becomes:
    \begin{eqnarray*}
      &\lfloor \ALS_{i+1} \cup \PreAATC{\ALS_{i+1}} \rfloor\\ &=\\
      &\lfloor \ALS_{i+1} \cup \PreAS{\ALS_{i+1}}{\Frontier_{i+1}} \rfloor.
    \end{eqnarray*}
    We need to prove
    $\ALS_{i+1} \cup \PreAATC{\ALS_{i+1}} = \ALS_{i+1} \cup
    \PreAS{\ALS_{i+1}}{\Frontier_{i+1}}$. For the sake of clarity let
    us denote:
    \begin{align*}
      \mathcal{L} &= \ALS_{i+1} \cup \PreAATC{\ALS_{i+1}}\\
      \mathcal{R} &= \ALS_{i+1} \cup
                    \PreAS{\ALS_{i+1}}{\Frontier_{i+1}}.
    \end{align*}
    Hence, we need to show that $\mathcal{L}=\mathcal{R}$.

    \begin{itemize}
    \item Assume that there exists a state $v \in \mathcal{L}$ such
      that $v \not\in \mathcal{R}$. Therefore $v \not\in \ALS_{i+1}$
      (there exists $v' \not\in \ALS_i$ where $v \in\break \uc{\PreE{v'}}$
      (a)) and $v \in \PreAATC{\ALS_{i+1}}$, therefore
      $\Succ{v} \subseteq\break \uc{\ALS_{i+1}}$ (b), i.e.
      $v' \in \ALS_{i+1}$. Thanks to the hypothesis,
      $v' \in \ALS_{i+1} \setminus \ALS_i$, i.e.
      $v' \in \Frontier_{i+1}$ (c). By combining (a), (b) and (c),
      $v \in \PreAS{\ALS_{i+1}}{\Frontier_{i+1}}$. In consequence
      $v \in \mathcal{R}$. In contradiction. Hence,
      $\mathcal{L} \subseteq \mathcal{R}$.

    \item Given $v \in \mathcal{R}$, if $v \in \ALS_{i+1}$ than
      $v \in \mathcal{L}$. Otherwise,
      $v \in \PreAS{\ALS_{i+1}}{\Frontier_{i+1}}$, therefore
      $v \in \PreAATC{\ALS_{i+1}}$ by definition of the operator $\PreAS{\mathcal{A}}{\mathcal{B}}$. Hence,
      $\mathcal{L} \supseteq \mathcal{R}$.
    \end{itemize}
  \end{enumerate}
  We conclude that
  $\ALS_{i+1} = \lfloor \ALS_{i} \cup \Frontier_{i+1} \rfloor$.
\end{proof}

Hence, the formula in line 5 of 
\algorithmcfname~\ref{alg:bw-tba} can be replaced by
$\ALS_{i+1} = \minac{\ALS_{i} \cup \Frontier_{i+1}}$ (see Proposition~\ref{prop:frontier},), where
$\Frontier_{i+1}$ is computed from $\Frontier_i$: 
$\Frontier_{i+1} = \lfloor\PreAS{\ALS_i}{\Frontier_i\cap V_\PR} \cup \ALS_{i}\cup$ \\$\PreE{\Frontier_i\cap V_\PS}\rfloor$.
Since all the computations are now performed on the frontier instead of
the whole $\ALS_i$ set, the performance of
\algorithmcfname~\ref{alg:bw-tba} is improved. This is the algorithm
we have implemented and that we report upon in the next section.

\section{Experimental Results\label{sec:experimental-results}}
\label{sec:experimental-results}

Let us now report on a series of experiments that demonstrate the
practical interest of our approach. Remember that our algorithm
performs an {\it exact test}, i.e. given any real-time system of
sporadic tasks, the algorithm always computes an online scheduler if
one exists. Otherwise, the algorithm proves the absence of scheduler
for the system.

Our implementations are made in C++ using the STL. We performed our
experiments on a Mac Pro (mid 2010) server with OS X Yosemite,
processor of 3.33 GHz, 6 core Intel Xeon and memory of 32 GB 1333 MHz
DDR3 ECC. Our programs were compiled with Apple Inc.'s distribution of
g++ version 4.2.1.

We compare the performance of our improved algorithm $\Bw$
(Algorithm~\ref{alg:bw-tba}, implemented with the frontier
optimisation described in Section~\ref{sec:furth-optim-algor}) to
three other approaches. The first one consists in scheduling the
system with the classical $\EDF$ (Earliest Deadline First) scheduler.
The second (called $\ES$) consists in first building the portion of
the game graph that contains all states that are reachable from $I$
(the initial state), then applying Algorithm~\ref{alg:es} to compute
\losing, and finally testing whether $I\in\losing$ or not. The third
approach, called $\Fw$ has been introduced in \cite{TCS}. Contrary to
$\Bw$ and $\ES$ that are \emph{backward approaches} (since we unfold
the successor relation in a backward fashion, starting from the \bad\
states), $\Fw$ is a \emph{forward algorithm}. It builds a portion of
the game graph and looks for a winning strategy using on-the-fly
computation. The idea has originally been introduced as a general
algorithm called OTFUR for safety games \cite{CDFLL-concur05}, and
$\Fw$ is essentially an improved version of OTFUR using antichain
techniques (and the $\iesim$ partial order), just as $\Bw$ is an
improved version of $\ES$.  In \cite{TCS}, we report on experiments
showing $\Fw$ outperforms $\ES$ in practice.


$\EDF$ is known to be {\it optimal} for arbitrary collections of
independent jobs scheduled upon \emph{uni}processor platforms. This
optimality result, however, is no longer true on \emph{multi}processor
platforms (that is, there are systems of sporadic tasks for which an
online scheduler exists, but that will not be scheduled by
$\EDF$). The goal of our first experiment is to compare the number of
systems that $\EDF$ and $\Bw$ can schedule. To this aim, we have
generated 2,000 sets of tasks, grouped by values of utilisation
$U$. For all instances, we consider $m=2$ CPUs and $n=4$ tasks. For
each $U \in [0.1,2]$ with a step of 0.1, 100 instances are randomly
generated. The $T$ parameters of all tasks range in the interval
$[5,15]$. The generation of this benchmark is based on the
\textsc{Uunifast} algorithm \cite{DBLP:journals/rts/BiniB05}.

\figurename~\ref{fig:edf_bw} depicts the ratio of systems scheduled by
$\EDF$ and $\Bw$ on the benchmark described above. All of the 2,000
randomly generated instances are feasible, hence $\Bw$ always computes
a scheduler, while the ratio of systems that are schedulable by $\EDF$
decreases sharply with the increase in the utilisation factor. This is
a first strong point in favour of our approach.

\begin{figure}
\centering
 \includegraphics[width=0.8\textwidth]{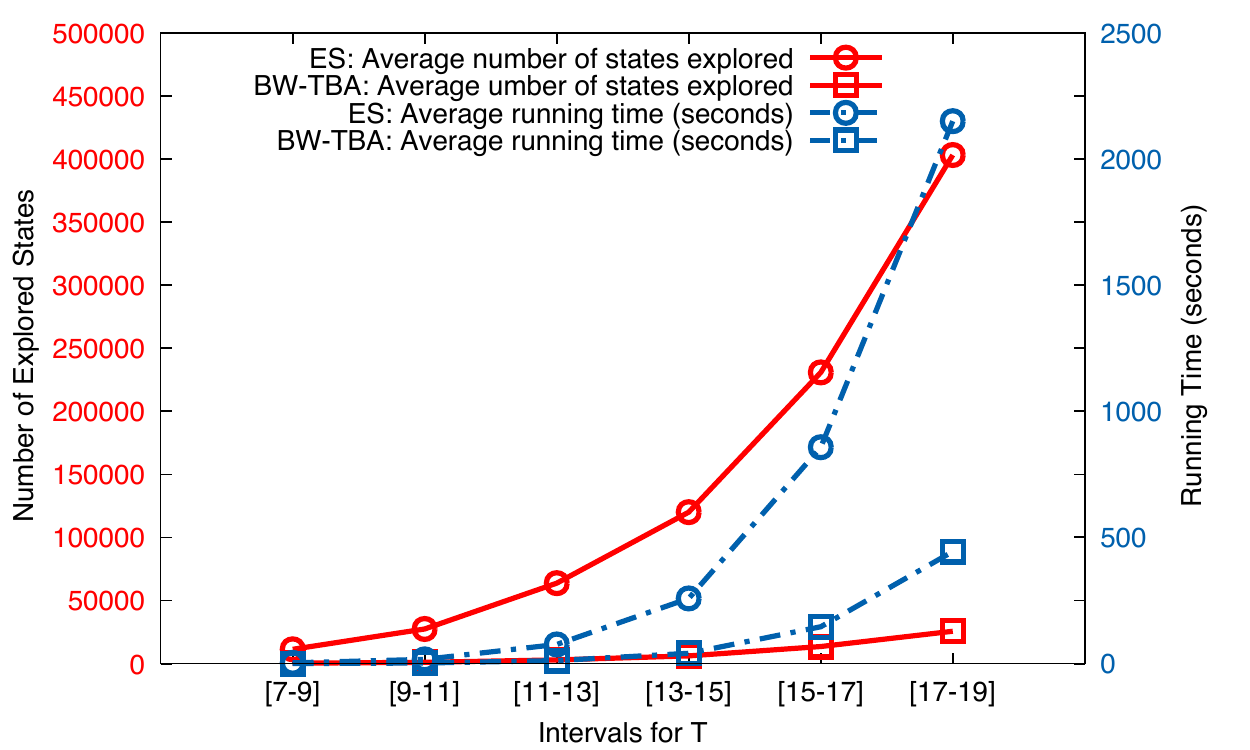}
 \caption{Comparison of the average space explored and the running
   time between $\ES$ vs. $\Bw$ for each $\tau^{[i, i+2]}$. }
 \label{fig:es}
\end{figure}
 
Like $\Bw$, the $\ES$ algorithm performs an exact feasibility test,
but we naively builds the whole graph of the game which is of
exponential size. In order to compare the performance of $\ES$ and
$\Bw$, we reuse the benchmark presented in \cite{TCS} (also based on
\textsc{Uunifast}). It consists of 2,100 sets of tasks grouped by
values of $T$, the interarrival time.  For all instances, we consider
$m = 2$ CPUs and $n=3$ tasks. For each $i$ in
$\{7,9,11,13,15,17,19\}$, a set called $\tau^{[i,i+2]}$ of 300 tasks
is generated, with a $T$ parameter in the $[i,i+2]$ range.  For each
game instance in the benchmark, we run both $\Bw$ and $\ES$ and
collect two metrics: the number of explored states and the running
time.

\figurename~\ref{fig:es} displays, for each set $\tau^{[i,i+2]}$, the
average number of states explored and the average running time, of the
two algorithms. Our improved algorithm $\Bw$ outperforms $\ES$ both in
terms of space and running time, by approximately one order of
magnitude. These results definitely demonstrate the practical interest
of antichain techniques for scheduling games.

Now, let us compare the performance of the forward and backward
algorithms both enhanced with antichains. We performed another
experiment on a benchmark where we let the number of tasks vary. We
randomly generated 45 instances of systems with $n$ tasks for
$n\in\{3,4,5, 6\}$ (on a 2 CPUs platform). In all instances, we have
$T\in[5,7]$, $U \in \{1,1.5,2\}$ and
$D \in [C,T]$. Like the two previous benchmarks, this one is 
based on \textsc{Uunifast} too.

\begin{table}[h]
\small
\begin{center}
  \begin{tabular}{crrrr}
    \toprule
    {\bf Ratio}     
    & \multicolumn{1}{c}{$\boldsymbol{n=3}$}  
    & \multicolumn{1}{c}{$\boldsymbol{n=4}$}  
    & \multicolumn{1}{c}{$\boldsymbol{n=5}$} 
    & \multicolumn{1}{c}{$\boldsymbol{n=6}$}    \\ 
    \midrule
      $\boldsymbol{(0,10\%)}$ & 0\% & 4.44\% & 46.67\% & 64.44\%\\ 
      $\boldsymbol{[10\%,50\%)}$ & 88.88\% &95.56\% &53.33\% & 35.56\%\\
      $\boldsymbol{ (50\%,100\%)}$  & 11.12\% & 0 \% & 0 \% & 0\%\\ \bottomrule
  \end{tabular}

\end{center}
\caption{Percentage of systems with $n$ tasks for which the ratio on
  the number of explored nodes ($\Bw$/$\Fw$) falls in the $(0,10\%)$,
  $[10\%,50\%)$ or $(50\%,100\%)$ interval.}
 \label{tab:space}
\end{table}

Our experiments show that, on all instances, $\Bw$ explores less
states than $\Fw$. This phenomenon happens because $\Bw$ computes only
the losing states while $\Fw$ explores both losing and winning states
during search.  \tablename~\ref{tab:space}, presents the ratio of
space explored by $\Bw$ and $\Fw$, for example, on systems with 6
tasks, $\Bw$ consumes less than 10\% of the memory needed by $\Fw$ on
64.64\% of instances.  Note that the memory performance of $\Bw$ (vs
$\Fw$) improve when the number of tasks increase.
  
\begin{table}[h]
\small
\begin{center}
  \begin{tabular}{crrrr}
    \toprule
     {\bf Num. tasks}   
    & \multicolumn{1}{c}{$\boldsymbol{n=3}$}  
    & \multicolumn{1}{c}{$\boldsymbol{n=4}$}  
    & \multicolumn{1}{c}{$\boldsymbol{n=5}$} 
    & \multicolumn{1}{c}{$\boldsymbol{n=6}$}    \\ 
    \midrule
       {\bf Ratio} &  84.44\% & 64.44\% & 44.44\%  & 37.77\% \\ 
      \bottomrule
  \end{tabular}
\end{center}
\caption{The ratio on the number of systems for which \Bw runs faster
  than \Fw.}
 \label{tab:time}
\end{table}

\tablename~\ref{tab:time} presents the ratio of instances on which
$\Bw$ runs faster than $\Fw$. Notice that the time taken for each
instance (of a fixed number of tasks) is variable.  For example, on
6-task systems, some are done within one minute; while others take
hours to complete. The time taken depends on several hardly
predictable characteristics of the system like the number of losing
states. These experiments show no clear winner between $\Bw$ and
$\Fw$, which are rather complementary approaches. In terms of memory,
$\Bw$ outperforms $\Fw$ but $\Fw$ might run faster on some systems.

\section{Conclusion}
Inspired by Bonifaci and Marchetti-Spaccamela, we have considered a
game-based model for the online feasibility problem of sporadic tasks
(on multiprocessor platforms), and we have developed an efficient
algorithm to analyse these games, which takes into account the
specific structure of the problem.  Broadly speaking, we believe that
the application of well-chosen formal methods techniques (together
with specialised heuristics that exploit the structure of the
problems), is a research direction that has a lot of potential and
that should be further followed.

\newcommand{\etalchar}[1]{$^{#1}$}

\end{document}